\newcommand{\dimension}{d}
\newcommand{\field}{C}
\newcommand{\norm}[1]{\ensuremath{\left| \! \left| {#1} \right| \! \right| }}  %norm
\newcommand{\card}{\operatorname{card}} % cardinality
\newcommand{\supp}{\operatorname{supp}} % support
\newcommand{\diag}{\operatorname{diag}} % diagonal matrices
\renewcommand{\Re}{\operatorname{Re}} % real- und imaginary-
\renewcommand{\Im}{\operatorname{Im}} % parts redefined
\newcommand{\I}{\mathrm{i}}        % imaginary unit
\newcommand{\D}{\mathrm{d}}        % differential
\newcommand{\E}{\mathrm{e}}        % e Eulers number
\newcommand{\MV}[1]{\operatorname{E}\{{#1}\}}	% expectation value
\newcommand{\Cov}[1]{\operatorname{Cov}\{{#1}\}}% covariances
\newcommand{\zvec}{\ensuremath{\mathbf z}}
\newcommand{\yvec}{\ensuremath{\mathbf y}}
\newcommand{\xvec}{\ensuremath{\mathbf x}}
\newcommand{\wvec}{\ensuremath{\mathbf w}}
\newcommand{\vvec}{\ensuremath{\mathbf v}}
\newcommand{\uvec}{\ensuremath{\mathbf u}}
\newcommand{\svec}{\ensuremath{\mathbf s}}
\newcommand{\rvec}{\ensuremath{\mathbf r}}
\newcommand{\qvec}{\ensuremath{\mathbf q}}
\newcommand{\pvec}{\ensuremath{\mathbf p}}
\newcommand{\kvec}{\ensuremath{\mathbf k}}
\newcommand{\fvec}{\ensuremath{\mathbf f}}
\newcommand{\evec}{\ensuremath{\mathbf e}}
\newcommand{\bvec}{\ensuremath{\mathbf b}}
\newcommand{\avec}{\ensuremath{\mathbf a}}
\newcommand{\Nullvec}{\ensuremath{\mathbf 0}}
\numberwithin{equation}{section}
\newtheorem*{theoremGoLo}{Theorem \cite{GoLo13}}
\newtheorem*{theoremHB}{Theorem \cite{Hackbusch93}}
\begin{document}
\renewcommand{\theequation}{\thesection.\arabic{equation}} % roman numbering

\preprint{}
\title{Norm-minimized Scattering Data from Intensity Spectra}

\author{Alexander Seel}
\affiliation{%
Center for Sensorsystems, University of Siegen, 57068 Siegen, Germany
}

\author{Arman Davtyan}
\affiliation{%
Faculty of Science and Engineering, University of Siegen, 57068 Siegen, Germany
}

\author{Ullrich Pietsch}
\affiliation{%
Faculty of Science and Engineering, University of Siegen, 57068 Siegen, Germany
}

\author{Otmar Loffeld}
\affiliation{%
Center for Sensorsystems, University of Siegen, 57068 Siegen, Germany
}

\date{\today}

\begin{abstract}
We apply the $\ell_1$~minimizing technique of compressive sensing 
(CS) to non-linear quadratic observations. For the example of 
coherent $X$\!-ray scattering we provide the formulae for a Kalman 
filter approach to quadratic CS and show how to reconstruct 
the scattering data from their spatial intensity distribution.
\end{abstract}

\pacs{02.30.Zz, 02.50.-r, 07.85.-m}
% PACS, the Physics and Astronomy Classification Scheme.
% 02.30.Zz Inverse Problems
% 02.50.-r Probability theory, stochastic processes, and statistics
% 07.85.-m X- and γ\!-ray instruments

%\keywords{}%Use showkeys class option if keyword display desired

\maketitle
%%%%%%%%%%%%%%%%%%%%%%%%%%%%%%%%%%%%%%%%%%%%%%%%%%%%%%%%%%%%%%%%%%%%%% 
\section{Introduction}
%%%%%%%%%%%%%%%%%%%%%%%%%%%%%%%%%%%%%%%%%%%%%%%%%%%%%%%%%%%%%%%%%%%%%%
\enlargethispage{1\baselineskip}
The rapidly growing Si technology in semiconductor electronics 
\cite{BBV07,BoPc08} opens the possibility to grow III-V inorganic 
nanowires, such as GaAs, InAs or InP, which were supposed to 
have potential \cite{CuLi01,DHCWL01} to become building blocks in a 
variety of nanowire-based nanoelectronic devices, for example in 
nanolaser sources \cite{HMFYWWRY01} or nanoelectronics \cite{GLWSL02}. 
Such epitaxially grown nanowires are repeating the crystal orientation 
of the substrate and usually grow in Wurtzite (WZ) or Zinc-Blende (ZB) 
structure differing in the stacking sequence $AB\!AB\!AB$ and 
$ABC\!ABC\!ABC$ respectively of the atomic bilayers.
Theoretical predictions on the electronic properties \cite{AkNaIt15}
of these nanowires show that stacking sequences with WZ and ZB 
segments considerably differ in the conductivity.
However, during the nanowire growth stacking faults, the mixing 
of ZB and WZ segments takes place, and twin defects \cite{BCBJSGP13} 
appear. As these defects have their own impact on the conductivity and 
band structure there is great interest in knowing the exact stacking 
sequence which can be studied by e.g.\ Transmission Electron 
Microscopy \cite{ShDoWa12}. 
But, as this is a destructive method, it is impossible to use the 
nanowire after the structural studies. Nowadays the $3$rd generation 
synchrotron sources and rapidly developing focusing devices like 
Fresnel Zone Plates opens new fields of non-destructive \mbox{$X$\!-ray} 
imaging. For example in the Coherent $X$\!-ray Diffraction Imaging 
experiments~(CXDI) an isolated nanoobject is illuminated with 
coherent $X$\!-ray radiation and the scattered intensity is measured 
by a 2D detector \cite{ChNu10,MIRM15} under the Bragg condition. 
The diffraction patterns are structure-specific and encode the 
information about the electron density of the sample and thus 
the stacking sequence of the atomic bilayers formally by Fourier 
transform. 
However, because of sensor physics the phase information is lost in 
CXDI measurement since the measured intensity pattern is the modulus
square of the scattered $X$\!-rays and no inverse Fourier transform can 
be directly applied to recover the stacking sequence. 
{The classical approach with the Patterson function
\cite{Patterson34} fails as the number of expected randomly distributed
stacking faults is too high. Other} inversion algorithms 
\cite{GeSa72,Fienup78} could be used instead to reconstruct the lost 
phase information: Although dual space iterative algorithms 
\cite{Fienup82} have been shown to converge under specific conditions 
\cite{RoHa09} there still remain some convergence problems for a number 
of cases when not enough preliminary information regarding the 
structure of the object is previously known \cite{DBLP16}.
Indeed ptycho\-graphy type of experiments \cite{PGTJHM14,HCOATY15} 
were suggested to determine at least the relative phases of the 
bilayers by interfering adjacent Bragg reflections. But this type 
of experiments are more difficult to realize, as it requires higher 
stability in comparison to conventional CXDI and the longer 
measurement time can, however, influence the sample \cite{FEKG09}.

Due to the principal loss in phase information the scattering 
data can be considered to be undersampled and algorithms tailored 
to this lack of information, like basis pursuit approaches 
\cite{ChDoSa01} realized by minimizing the $\ell_1$ norm of 
sparse vectors in an appropriate basis {or phase 
retrieval via matrix completion \cite{CESV13,ChMoPa10,FoRaWa11}}, could be 
tested for reconstruction from conventional CXDI measurements: 
{As this data are recorded without structured 
illumination for applying matrix completion we focus on} utilizing 
the $\ell_1$ minimizing technique to reconstruct undersampled sparse 
signals \cite{CaRoTa06,Donoho06} {as vectors} from 
linear mappings between the signal and the observations. For an 
overview of this so called compressive sensing (CS) see the 
textbook \cite{FoRa13}. 
The method of CS aims to a signal's sparse support which can be 
estimated by e.g.\ Kalman filtering \cite{Vaswani08}.
As the underlying filter formulas also relate observations and 
signals to be reconstructed by linear mappings this technique 
meets CS and was also used to explicitly minimize the $\ell_1$ 
norm by so called pseudo measurements \cite{KCHGRS10}. For an 
example see the reconstruction from a random sample of Fourier 
coefficients \cite{LoEsCo15}. Even for 
non-linear mappings of signals Kalman filtering applies by using 
Jacobians rather than constant sensing matrices. These extended 
Kalman filters (EKF) are used for e.g.\ tracking issues 
\cite{JuUh97} or robotics \cite{Correll14} and match the sensing 
problem of the quadratic nonlinearities in the spatial intensity 
distribution of CXDI.

{The article is organized as follows:
In Section~\ref{intensity} we setup the observation model for 
modulus-squared amplitudes of intensity distributions in coherent 
$X$\!-ray scattering and point out the relation to the $\ell_1$ minimization.
In Section~\ref{kalmanmodel} we give a brief overview of the linear 
Kalman filter model and show how to incorporate the complex 
non-analytic $\ell_1$ norm as a linearized observation to meet the 
minimizing strategy of CS. In this framework we prove a convergence 
concept for the $\ell_1$ minimization in the reconstruction scheme.
In Section~\ref{linobsmod} we apply our findings to 1D and 2D 
Fourier data and remark in Section~\ref{outlook} 
on future considerations.}

%%%%%%%%%%%%%%%%%%%%%%%%%%%%%%%%%%%%%%%%%%%%%%%%%%%%%%%%%%%%%%%%%%%%%%
\section{Motivation}\label{intensity}
%%%%%%%%%%%%%%%%%%%%%%%%%%%%%%%%%%%%%%%%%%%%%%%%%%%%%%%%%%%%%%%%%%%%%%

%%%%%%%%%%%%%%%%%%%%%%%%%%%%%%%%%%%%%%%%%%%%%%%%%%%%%%%%%%%%%%%%%%%%%%
\subsection{Intensity Spectra in Coherent Scattering Experiments}
%%%%%%%%%%%%%%%%%%%%%%%%%%%%%%%%%%%%%%%%%%%%%%%%%%%%%%%%%%%%%%%%%%%%%%
Exposing crystals to non-destructive coherent $X$\!-ray radiation in 
$d$ dimensions the amplitude of the elastically scattered radiation 
is proportional to the Fourier transform \cite{Warren90}
\begin{equation} \label{signalout}
S(\qvec)=\int\!\!\D^\dimension x\,\E^{\I\langle\qvec|\rvec\rangle} b(\rvec)
\end{equation}
of the electron density $b(\rvec)$, where 
the vector $\qvec:=\kvec_{\mathrm{out}}-\kvec_{\mathrm{in}}$ is a 
parametrization of the direction where radiation is detected and 
(multiplied by Planck's constant) also describes the momentum transfer
in a kinematical scattering theory; 
$\kvec_{\mathrm{in}}$ is the incident direction whereas 
$\kvec_{\mathrm{out}}$ represents the outgoing direction for radiation 
with wavevectors $\norm{\kvec_{\mathrm{out}}}=
\norm{\kvec_{\mathrm{in}}}=\frac{2\pi}{\lambda}$
of wavelength $\lambda$:

\begin{center}
\begin{tikzpicture}[scale=0.63]
\foreach \t in {.2}
{
 \foreach \s in {1/7}
  {
  \draw[join=round] (-4.5*\t,-3.5*\t) -- ++(6*\t,-\t) -- ++(3*\t,2*\t) 
    -- ++(0,6*\t) -- ++(-6*\t,\t) -- ++(-3*\t,-2*\t) -- cycle;
  \draw[cap=round,join=round] (1.5*\t,-4.5*\t) -- ++(0,{6*\t-2*6*\s*\t}) 
    -- ++(-6*\s*\t,\s*\t) -- ++(0,6*\s*\t) -- ++(-2*6*\s*\t,2*\s*\t) 
    -- ++(0,6*\s*\t) -- ++({-6*\t+3*\s*6*\t},{\t-3*\s*\t});
  \draw[cap=round,join=round] (1.5*\t,{1.5*\t-2*6*\s*\t}) -- ++(3*\s*\t,2*\s*\t) 
    -- ++(0,6*\s*\t) -- ++(3*\s*\t,2*\s*\t) -- ++(0,6*\s*\t) 
    -- ++({3*\t-2*\s*3*\t},{2*\t-2*\s*2*\t});
  \draw[cap=round,join=round] ({1.5*\t-3*\s*6*\t)},{1.5*\t+3*\s*\t)}) 
    -- ++(3*\s*\t,2*\s*\t) -- ++(2*6*\s*\t,-2*\s*\t) -- ++(3*\s*\t,2*\s*\t) 
    -- ++(6*\s*\t,-\s*\t);
  \draw[cap=round,join=round] ({1.5*\t-3*\s*6*\t)},{1.5*\t-3*\s*\t}) 
    -- ++(3*\s*\t,2*\s*\t) -- ++(3*\s*6*\t,-3*\s*\t);
  \draw[cap=round,join=round] ({1.5*\t-6*\s*\t},{1.5*\t-11*\s*\t}) 
    -- ++(3*\s*\t,2*\s*\t) -- ++(0,6*2*\s*\t);
  \draw[cap=round,join=round] ({1.5*\t-3*\s*\t},{1.5*\t-9*\s*\t}) 
    -- ++(6*\s*\t,-\s*\t);
  \draw[cap=round,join=round] ({1.5*\t-6*\s*\t},{1.5*\t-5*\s*\t}) 
    -- ++(2*\s*3*\t,2*\s*2*\t) -- ++(0,6*\s*\t); 
  \draw[cap=round,join=round] ({1.5*\t-15*\s*\t},{1.5*\t+5*\s*\t}) 
    -- ++ (0,-6*\s*\t);
  \draw[cap=round,join=round] ({1.5*\t+6*\s*\t},{1.5*\t-2*\s*\t}) 
    -- ++(-6*\s*\t,\s*\t);
 }
 \foreach \a/\b in {4/75}
 {
  \draw[lightgray] (0,0) circle[radius=\a];
  \draw[lightgray] (\a,0) -- ++(0.7*\a,0);
  \draw[lightgray] (-\a,0) -- ++(-0.85*\a,0);
  \draw[very thick,-stealth,red,cap=round] (3*\t,0) -- (\a,0);
  \draw[very thick,-stealth,red,cap=round] (\b:3*\t) -- (\b:\a);
  \draw[very thick,-stealth,cap=round] (0:\a) -- (\b:\a);
  \node[red] at ({\a/2},-.4) {$\kvec_\mathrm{in}$};
  \node[red,rotate=\b] at ({cos(\b)*\a/2-.4},{sin(\b)*\a/2}) {$\kvec_\mathrm{out}$};
  \node at ({\a*(cos(\b)+1)/2-.35},{\a/2*sin(\b)-.2}) {$\qvec$};
  \foreach \w/\u in {180/0.25}
  {
   \draw[ultra thick,color=red!25,stealth-] (\w:{\a*0.6}) -- (\w:{1.6*\a});
   \foreach \c in {-2,...,3}
   \draw[cap=round,color=red!50] ({(\a+\c*\u)*cos(\w)+sin(\w)},{(\a+\c*\u)*sin(\w)-cos(\w)}) 
    -- ++ ({\w+90}:2);
  }
  \foreach \w/\u in {140/0.25,-142/0.25,-33/0.25}
  {
   \draw[very thick,color=black!15,-stealth,cap=round] (\w:{\a*0.4}) -- (\w:{1.4*\a});
   \foreach \c in {-3,...,2}
   \draw[cap=round,lightgray] ({(\a+\c*\u)*cos(\w)+sin(\w)},{(\a+\c*\u)*sin(\w)-cos(\w)}) 
    -- ++ ({\w+90}:2);
  }
  \foreach \w in {-110,-71} % Detektoren
  {
   \draw[cap=round,thick] ({0.9957*\a*cos(\w)-0.5*sin(\w)},{0.997*\a*sin(\w)+0.5*cos(\w)}) 
    -- ({0.997*\a*cos(\w)+0.5*sin(\w)},{0.997*\a*sin(\w)-0.5*cos(\w)});
   \draw[cap=round,thick,fill=white] plot[smooth,tension=2] 
     coordinates {({\a*cos(\w)-0.5*sin(\w)},{\a*sin(\w)+0.5*cos(\w)}) 
     ({(\a+.8)*cos(\w)},{(\a+.8)*sin(\w)}) 
     ({\a*cos(\w)+0.5*sin(\w)},{\a*sin(\w)-0.5*cos(\w)})};
   \draw[cap=round,thick] plot[smooth,tension=.7] 
     coordinates {({(\a+.8)*cos(\w)},{(\a+.8)*sin(\w)}) 
     ({(\a+1.25)*cos(\w+2)},{(\a+1.25)*sin(\w+2)}) 
     ({(\a+1.5)*cos(\w+8.5)},{(\a+1.5)*sin(\w+8.5)})
     ({(\a+2)*cos(\w+10)},{(\a+2)*sin(\w+10)})};
  }
  \foreach \w in {40} % Detektoren
  {
   \draw[cap=round,thick] ({0.9957*\a*cos(\w)-0.5*sin(\w)},{0.997*\a*sin(\w)+0.5*cos(\w)}) 
    -- ({0.997*\a*cos(\w)+0.5*sin(\w)},{0.997*\a*sin(\w)-0.5*cos(\w)});
   \draw[cap=round,thick,fill=white] plot[smooth,tension=2] 
     coordinates {({\a*cos(\w)-0.5*sin(\w)},{\a*sin(\w)+0.5*cos(\w)}) 
     ({(\a+.8)*cos(\w)},{(\a+.8)*sin(\w)}) 
     ({\a*cos(\w)+0.5*sin(\w)},{\a*sin(\w)-0.5*cos(\w)})};
   \draw[cap=round,thick] plot[smooth,tension=.7] 
     coordinates {({(\a+.8)*cos(\w)},{(\a+.8)*sin(\w)}) 
     ({(\a+1.25)*cos(\w-2)},{(\a+1.25)*sin(\w-2)}) 
     ({(\a+1.5)*cos(\w-8.5)},{(\a+1.5)*sin(\w-8.5)})
     ({(\a+2)*cos(\w-10)},{(\a+2)*sin(\w-10)})};
  }
  \foreach \w in {-33} % Detektoren
  {
   \draw[cap=round,thick] ({0.9957*\a*cos(\w)-0.5*sin(\w)},{0.997*\a*sin(\w)+0.5*cos(\w)}) 
     -- ({0.997*\a*cos(\w)+0.5*sin(\w)},{0.997*\a*sin(\w)-0.5*cos(\w)});
   \draw[cap=round,thick,fill=white] plot[smooth,tension=2] 
     coordinates {({\a*cos(\w)-0.5*sin(\w)},{\a*sin(\w)+0.5*cos(\w)}) 
     ({(\a+.8)*cos(\w)},{(\a+.8)*sin(\w)}) 
     ({\a*cos(\w)+0.5*sin(\w)},{\a*sin(\w)-0.5*cos(\w)})};
   \draw[cap=round,thick] plot[smooth,tension=.7] 
     coordinates {({(\a+.8)*cos(\w)},{(\a+.8)*sin(\w)}) 
     ({(\a+1.25)*cos(\w+2)},{(\a+1.25)*sin(\w+2)}) 
     ({(\a+1.5)*cos(\w+8.5)},{(\a+1.5)*sin(\w+8.5)})
     ({(\a+2)*cos(\w+10)},{(\a+2)*sin(\w+10)})};
   \draw[fill=white]  ({(\a+1.25)*cos(\w+2)},{(\a+1.25)*sin(\w+2)}) circle (.1);
   \foreach \c in {50,90,130,170,-28,-68,-108}
    {
     \draw ({(\a+1.25)*cos(\w+2)+0.15*cos(\c+\w-152)},{(\a+1.25)*sin(\w+2)+0.15*sin(\c+\w-152)}) 
       -- ++(\c+\w-152:.24);
    }
  }
  \node[red!25] at (-1.35*\a,-0.4) {$\kvec_\mathrm{in}$};
 }
}
\draw[fill] (-.22,-.25) circle[radius=.1];
\draw (-.22,-.25) -- (-.5,-1.4);
\node at (-.3,-1.9) {$b(\rvec)$};
\end{tikzpicture}
%
%\caption{Incident and elastically scattered outgoing wavevectors 
%$\kvec_{\mathrm{in}}$ and $\kvec_{\mathrm{out}}$. The difference 
%$\qvec:=\kvec_{\mathrm{out}}-\kvec_{\mathrm{in}}$ parametrizes the 
%direction where radiation is detected and naturally enters Fourier 
%integrals, cf.\ (\ref{signalout}), as a reciprocal space vector.}
%\label{incidentout}
\end{center}
\newpage

Following standard textbooks, e.g.\ \cite{Warren90}, one deals with 
two sets of basis vectors to characterize the scattering. The spatial 
vectors are represented in the basis 
$\big\{\avec_1,\ldots,\avec_\dimension\big\}$ 
of the grid whereas the reciprocal basis 
$\big\{\bvec_1,\ldots,\bvec_\dimension\big\}$ 
is used for wave vectors $\qvec,\kvec_\mathrm{in},\kvec_\mathrm{out}$ 
relying on the normalization
\begin{equation}\label{normalization}
\langle\avec_j|\bvec_k\rangle=2\pi\delta_{jk} 
\quad , \quad j,k=1,\ldots,\dimension\quad .
\end{equation}

As the units of lengths and wave vectors are carried by the basis 
sets the scalar products read with dimensionless factors $y_j$ and 
$\kappa_k$
\begin{equation} \label{lincombi}
\rvec=\sum_{j=1}^\dimension y_j\avec_j \quad,\quad 
\qvec=\sum_{k=1}^\dimension \kappa_k\bvec_k
\quad ,\quad
\langle\qvec|\rvec\rangle=2\pi\sum_{j=1}^\dimension y_j \kappa_j \quad .
\end{equation}

Restricting to a finite grid with $n_1,n_2\ldots,n_\dimension$ 
lattice sites in each direction with periodic boundary conditions the 
possible grid positions and wave vectors allowing for a discrete
Fourier transform\footnote{
In solid state physics the reciprocal vectors $\qvec$ according to
(\ref{lincombi}) and (\ref{1stBrillouin}) for the $\kappa_j$ are said 
to be from the $1$st Brillouin zone which is a fragmentation of 
the elementary cell spanned by $\{\bvec_1,\ldots,\bvec_\dimension\}$.
}
read for all $j=1,2,\ldots,\dimension$
\begin{equation}\label{1stBrillouin}
y_j\in\big\{0,1,\ldots,n_j-1\big\} \quad ,\quad 
\kappa_j\in\bigg\{\frac0{n_j},\frac{1}{n_j},\ldots,\frac{n_j-1}{n_j}\bigg\}
\quad .
\end{equation}

Thus the scalar product separates into the $d$ spatial dimensions 
according to
\begin{equation}
\langle\qvec|\rvec\rangle=\sum_{j=1}^\dimension\frac{2\pi k_j r_j}{n_j}
\quad ,\quad
k_j,r_j\in\big\{0,1,\ldots,n_j-1\big\}
\end{equation}
where the $k_j$ are related to the lattice positions in direction 
of the lattice constant $\avec_j$ and $r_j$ refer to a discretized 
wave vector in direction of the corresponding reciprocal basis $\bvec_j$.
For  example a $d=2$ dimensional regular hexagonal lattice 
$\{\avec_1,\avec_2\}$ encloses $120^\circ$-angles: 

\begin{center}
\vspace*{1ex}
\begin{tikzpicture}[scale=0.85]
\foreach \m in {6}
{
\draw[lightgray,join=round] (0,0) -- ++(\m,0) -- ++(60:3) 
  -- ++(120:3) -- ++(-\m,0) -- ++(-120:3) -- cycle;
\draw[lightgray,cap=round] (-.5,{sqrt(3)/2}) -- ++(\m+1,0);
\draw[lightgray,cap=round] (-1,{sqrt(3)}) -- ++(\m+2,0);
\draw[lightgray,cap=round] (-1.5,{3*sqrt(3)/2}) -- ++(\m+3,0);
\draw[lightgray,cap=round] (-1,{2*sqrt(3)}) -- ++(\m+2,0);
\draw[lightgray,cap=round] (-.5,{5*sqrt(3)/2}) -- ++(\m+1,0);
\foreach \t in {3,...,\m}
 {
  \draw[lightgray,cap=round] (\t-3,0) -- ++(60:6);
  \draw[lightgray,cap=round] (\t-3,{3*sqrt(3)}) -- ++(-60:6);
 }
 \draw[lightgray,cap=round] (-.5,{sqrt(3)/2}) -- ++(60:5);
 \draw[lightgray,cap=round] (-1,{sqrt(3)}) -- ++(60:4);
 \draw[lightgray,cap=round] (-.5,{5*sqrt(3)/2}) -- ++(-60:5);
 \draw[lightgray,cap=round] (-1,{2*sqrt(3)}) -- ++(-60:4);
 \draw[lightgray,cap=round] (\m+.5,{sqrt(3)/2}) -- ++(120:5);
 \draw[lightgray,cap=round] (\m+1,{sqrt(3)}) -- ++(120:4);
 \draw[lightgray,cap=round] (\m+1,{2*sqrt(3)}) -- ++(-120:4);
 \draw[lightgray,cap=round] (\m+.5,{5*sqrt(3)/2}) -- ++(-120:5);
\draw[ultra thick,join=round] (1.5,{sqrt(3)/2}) -- ++(\m-2,0) -- ++(120:3)
  -- ++(180:\m-2) -- cycle;
\coordinate (A) at (2,{sqrt(3)});
\draw[fill=black!12] (A) -- ++(1,0) -- ++(120:1) -- ++(180:1) -- cycle;
\draw[very thick,-stealth,cap=round] (A) -- ++(1,0);
\draw[very thick,-stealth,cap=round] (A) -- ++(120:1);
\node at (2.52,1.47) {$\avec_1$};
\node at (1.55,1.98) {$\avec_2$};
}
\begin{scope}[xshift=10cm,yshift=.4cm]
\foreach \m/\s in {5/1.6}
{
\draw[lightgray,join=round] (0,0) -- ++(-30:\s) -- ++(30:{2*\s}) -- ++(0,{(\m-3)*\s}) 
  -- ++(150:\s) -- ++(-150:{2*\s}) -- ++(0,{(3-\m)*\s}) -- cycle;
\draw[lightgray,cap=round] (0,0)-- ++(30:{3*\s});
\draw[lightgray,cap=round] (0,\s)-- ++(30:{3*\s});
\draw[lightgray,cap=round] (0,\s)-- ++(-30:{2*\s});
\draw[lightgray,cap=round] (0,{2*\s})-- ++(-30:{3*\s});
\draw[lightgray,cap=round] ({\s*sqrt(3)/2},{2.5*\s})-- ++(-30:{2*\s});
\draw[lightgray,cap=round] ({\s*sqrt(3)/2},{-.5*\s}) -- ++(0,{3*\s});
\draw[lightgray,cap=round] ({\s*sqrt(3)},0) -- ++(0,{3*\s});
\coordinate (B) at ({\s*sqrt(3)/2},{.5*\s});
\draw[fill=black!12] (B) -- ++(30:\s) -- ++(0,\s) -- ++(-150:\s) -- cycle;
\draw[very thick,-stealth,cap=round] (B) -- ++(30:\s);
\draw[very thick,-stealth,cap=round] (B) -- ++(90:\s);
\foreach \a in {1,2}
{
  \draw[densely dashed] ({\s*sqrt(3)/2},{.5*\s+\a*\s/3}) -- ++(30:\s);
}
\foreach \a in {1,2,3}
 {
  \draw[densely dashed] ({\s*sqrt(3)/2+\s*sqrt(3)*\a/2/4},{.5*\s+.5*\s*\a/4}) -- ++(0,\s);
 }
\node at (0.65*\s,1.05*\s) {$\bvec_2$};
\node at (1.42*\s,.55*\s) {$\bvec_1$};
}
\end{scope}
\end{tikzpicture}
%\caption{A regular hexagonal lattice is spanned by $\{\avec_1,\avec_2\}$ 
%enclosing $120^\circ$-angles (left panel). The bold parallelogram 
%represents periodic boundary conditions with e.g.\ $n_1=4$ and 
%$n_2=3$ fragmenting the elementary cell spanned by $\{\bvec_1,\bvec_2\}$ 
%in reciprocal space (right panel) according to (\ref{1stBrillouin}).
%With the normalization (\ref{normalization}) the reciprocal lattice 
%encloses angles of $60^\circ$. }
%\label{reciprocalspace}
\end{center}

The bold parallelogram 
represents periodic boundary conditions with e.g.\ $n_1=4$ and 
\mbox{$n_2=3$} fragmenting the 1st Brillouin zone, the elementary cell 
spanned by $\{\bvec_1,\bvec_2\}$ in reciprocal space, into a subgrid
according to (\ref{1stBrillouin}). With the normalization 
(\ref{normalization}) the reciprocal lattice encloses angles of 
$60^\circ$ with a fixed orientation with respect to $\{\avec_1,\avec_2\}$.

%%%%%%%%%%%%%%%%%%%%%%%%%%%%%%%%%%%%%%%%%%%%%%%%%%%%%%%%%%%%%%%%%%%%%%
\subsection{Setting}
%%%%%%%%%%%%%%%%%%%%%%%%%%%%%%%%%%%%%%%%%%%%%%%%%%%%%%%%%%%%%%%%%%%%%%
Of course for the application the formulas are only needed up to 
$d=3$~dimensions: Nanowires  can be characterized 
by the stacking sequence of atomic bilayers which are shifted laterally 
and vertically with respect to each other.
In coherent $X$\!-ray diffraction summing up all scattered radiation 
from a certain bilayer perpendicular to the growth direction
$\avec_3$ yields 
a complex scattering amplitude $x_k\in\mathbb{C}$ associated with 
the $k$th bilayer spanned by $\big\{\avec_1,\avec_2\big\}$. 
Due to the hexagonal lattice and with respect to an arbitrary 
reference bilayer there are three different phase factors 
\begin{equation}\label{phasesc}
\Big\{1 \;,\; \exp\Big(\frac{2\pi\I}{3}\big(2\kappa_1+\kappa_2\big)\Big) 
\;,\; \exp\Big(\frac{2\pi\I}{3}\big(\kappa_1+2\kappa_2\big)\Big) \Big\}
\end{equation}
for the amplitudes left \cite{Warren90,BHKSSS11}. Wave vectors $\qvec$
%$\qvec=2\pi q_1\bvec_1 + 2\pi q_2\bvec_2 +2\pi q_3\bvec_3$
sensitive to the arrangement of the bilayers are selected
from directions related to the Bragg condition by
$\kappa_1-\kappa_2\not=3N$  
with $N,\kappa_1,\kappa_2\in\mathbb{Z}$ yielding
\mbox{$\big\{1,\exp\big(-\frac{2\pi\I}{3}\big),
\exp\big(\frac{2\pi\I}{3}\big)\big\}$},
which directly relates to the three possible lateral shifts.
With respect to the experimental setup \cite{DBLP16} this is satisfied
by $\kappa_1=0$ and $\kappa_2=1$. Recovering the stacking 
sequence of this relative phase factors by varying $\kappa_3$ 
directly reveals the Zinc-Blende or Wurtzite structure of the 
wire along with their stacking faults.

Carrying out the 
remaining summation over all equidistant bilayers in growth 
direction~$\avec_3$ mathematically corresponds 
to the 1D discrete Fourier transform 
\begin{equation}
S_r=\sum_{k=0}^{n-1}\E^{\frac{2\pi\I k r}{n}} x_k 
%= \sum_{k=1}^{n}\E^{\frac{2\pi\I k r}{n}} x_k 
\quad , \quad
r=0,1,\ldots,n-1
\end{equation}
of the periodically continued complex scattering amplitudes~$x_k$
combined to the vector $\xvec$. Each non-zero entry represents 
one bilayer and $r$ refers to the discretized component $\kappa_3$ 
of the $\qvec$ vector in the reciprocal basis corresponding to the 
growth direction.
As the outcome of the detector is the measured intensity rather
than the amplitude the observation is the squared signal
\begin{equation} \label{signal}
\big|S_r\big|^2 = \langle\xvec|\!|T_r|\xvec\rangle
\quad , \quad r=0,1,\ldots,n-1
\end{equation}
where the Fourier coefficients build a hermitian Toeplitz matrix
\begin{equation}\label{toeplitz}
T_r:=\big(\E^{-\frac{2\pi\I r}{n}(p-q)}\big)_{pq}\in\mathbb{C}^{n\times n}
\quad , \quad r=0,1,\ldots,n-1
\end{equation}
showing orthogonality $T_r T_s = n \delta_{rs} T_r$ with respect 
to multiplication and a completeness property
$T_0+T_1+\ldots +T_{n-1}=n\mathbbm{1}_n$ with respect to summation.
Clearly, due to the squaring the signal (\ref{signal}) is invariant 
both under the transformation $\xvec\to\E^{\I\phi}\xvec$ with any 
global phase $\phi\in\mathbb{R}$ and the reflection $x_k\to \overline{x}_{n-1-k}$,  
$k=0,\ldots,n-1$ on the lattice. As the spatial directions 
exhibit periodic boundary conditions the signal also remains invariant 
under discrete translations $x_k\to x_{k+q}$, $q\in\mathbb{Z}$ on the grid.

On the contrary, setting the $\qvec$ component $\kappa_3$ corresponding 
to the growth direction $\avec_3$ in the reciprocal basis to zero, 
(\ref{signalout}) examines the 2D structure of all the $M$ bilayers 
simultaneously. If they are assumed to be identical without 
lateral\footnote{
However, in the hexagonal lattice there are three possible 
lateral shifts yielding additional phase factors in 
(\ref{Fourier2D}) for each layer. So carrying out the complete
Fourier transform of all bilayers  yields a random sequence 
$c_1+c_2+\ldots+c_M$ with
\begin{equation}\label{phases}
c_j\in\bigg\{1 \;,\; \exp\bigg(\frac{2\pi\I}{3}\Big(\frac{2r_1}{n_1}+\frac{r_2}{n_2}\Big)\bigg) 
\;,\; \exp\bigg(\frac{2\pi\I}{3}\Big(\frac{r_1}{n_1}+\frac{2r_2}{n_2}\Big)\bigg) \bigg\}
\quad ,\quad j=1,\ldots,M
\end{equation}
rather than the constant $M$ in front of (\ref{Fourier2D}) and 
(\ref{FTsquare2D}). In the general case the $\qvec$ vectors
corresponding to the bilayers can be expressed by
\begin{math}
\qvec=\kappa_1 \bvec_1 + \kappa_2 \bvec_2
\end{math}. Then the phase factors (\ref{phases}) read
\begin{equation}
c_j\in\Big\{1 \;,\; \exp\Big(\frac{2\pi\I}{3}\big(2\kappa_1+\kappa_2\big)\Big) 
\;,\; \exp\Big(\frac{2\pi\I}{3}\big(\kappa_1+2\kappa_2\big)\Big) \Big\}
\quad ,\quad j=1,\ldots,M
\end{equation}
and their $\qvec$ dependence can be suppressed by choosing 
wave vectors  related to the Bragg condition
by $\kappa_1-\kappa_2=3N$ with $N,\kappa_1,\kappa_2\in\mathbb{Z}$.
One could be misled that this selects a certain amount of data 
points on the grid spanned by $\{\bvec_1,\bvec_2\}$ allowing for 
CS techniques. Switching to the discrete version
this reduces to one corner of the $1$st~Brillouin zone, i.e.\ to
the case of one single data point $r_1=r_2=0$ which is in fact 
too little for CS.
}
shifts the 
scattering amplitudes of the bilayer's lattice sites are 2D real 
data sets $X_{k_1 k_2}$ with the discrete Fourier transform 
\begin{equation} \label{Fourier2D}
S_{r_1 r_2} =M\cdot \sum_{k_1=0}^{n_1-1}\sum_{k_2=0}^{n_2-1}
\exp\Big({\frac{2\pi\I k_1 r_1}{n_1}}\Big) \cdot
\exp\Big({\frac{2\pi\I k_2 r_2}{n_2}}\Big)\cdot X_{k_1 k_2}
\end{equation}
for fixed $r_1=0,1,\ldots,n_1-1$ and $r_2=0,1,\ldots,n_2-1$
discretizing $\kappa_1$ and $\kappa_2$.
Squaring (\ref{Fourier2D}) yields Toeplitz matrices with Kronecker 
product structure (\ref{kronecker}) for the detected signal
\begin{equation} \label{FTsquare2D}
\big|S_{r_1 r_2}\big|^2 = M^2\cdot
\langle\xvec|\!| T_{r_1} \otimes T_{r_2}|\xvec\rangle 
\end{equation}
with multiple row and column indices $(p_1 p_2)$ and $(q_1 q_2)$ 
respectively referring to the ordinary rows $p_1,p_2$ and columns 
$q_1,q_2$ of the Toeplitz matrices (\ref{toeplitz}):
\begin{equation} \label{mapping2D1D}
\big(T_{r_1} \otimes T_{r_2}\big)_{(p_1 p_2),(q_1 q_2)} 
= \big(T_{r_1}\big)_{p_1 q_1} \big(T_{r_2}\big)_{p_2 q_2}
\quad , \quad \big(\xvec\big)_{(q_1 q_2)} := X_{q_1 q_2}
\end{equation}

%{Convoluting the original scattering data with the 
%inverse Fourier transform of any phase function forms an infinite 
%set of broadened scattering data with respect to a given intensity 
%distribution $|S|^2$. If this signal is assumed to be oversampled 
%the original data appear to be sparse and can be recovered (up to 
%translations, reflections or global phases) by an $\ell_1$~minimization 
%with respect to (\ref{signal}) or (\ref{FTsquare2D}).}

{The scattering amplitudes in Fourier domain 
can be assigned with arbitrary phases. As this leaves the given intensity 
distribution $|S|^2$ unchanged an related infinite set of broadened 
scattering data is formed by convolution. But if the signal $|S|^2$
is assumed to be oversampled the original scattering data in this 
set appear to be sparse and can be recovered  by an $\ell_1$~minimization 
run on e.g.\ (\ref{signal}) or (\ref{FTsquare2D}) -- up to translations, 
reflections or global phases.}

%%%%%%%%%%%%%%%%%%%%%%%%%%%%%%%%%%%%%%%%%%%%%%%%%%%%%%%%%%%%%%%%%%%%%%
\section{Kalman Filter-driven $\ell_1$ minimization}\label{kalmanmodel}
%%%%%%%%%%%%%%%%%%%%%%%%%%%%%%%%%%%%%%%%%%%%%%%%%%%%%%%%%%%%%%%%%%%%%%

%%%%%%%%%%%%%%%%%%%%%%%%%%%%%%%%%%%%%%%%%%%%%%%%%%%%%%%%%%%%%%%%%%%%%%
\subsection{Kalman Filter Equations}\label{kalmanmodeleq}
%%%%%%%%%%%%%%%%%%%%%%%%%%%%%%%%%%%%%%%%%%%%%%%%%%%%%%%%%%%%%%%%%%%%%%
{The equations for the Kalman filter usually apply to 
vectors over the field of real numbers~$\mathbbm{R}$. We will examine 
next that the equations can also be extended to the field $\mathbbm{\field}$}.
Anticipating this in the classical state space approach \cite{Maybeck79a} 
the vectorial quantity $\xvec_k$ is supposed to evolve according to the 
linear evolution
\begin{subequations}\label{themodel}
\begin{align}
\label{themodellinear}
\xvec_{k+1} = &A_{k}\xvec_k +\uvec_k + G\wvec \in\mathbb{\field}^n \text{ stochastic}\\
&A_k\in \mathbb{\field}^{n\times n} \;\;,\;\; \uvec_k\in\mathbb{\field}^n \text{ deterministic}\\
&G\in \mathbb{\field}^{n\times r} \;\;,\;\; \wvec\in\mathbb{\field}^r \text{ stochastic}  
\end{align}
\end{subequations}
with a fixed matrix $G$ and a determined sequence of evolution 
matrices $A_k$. In the model above the quantity $\xvec_k$ is assumed 
to be only traceable indirectly by linear observations $\yvec_k$ which 
can be viewed as linear mappings with given sensing matrices $C_k$ by
\begin{subequations} \label{obsmod}
\begin{align}
\yvec_{k} = &C_k\xvec_k + \vvec\in\mathbb{\field}^m \text{ stochastic}\\
&C_k\in \mathbb{\field}^{m\times n} \;\;,\;\; \vvec\in\mathbb{\field}^m \text{ stochastic .} 
\end{align}
\end{subequations}

The stochastic behaviour of all the considered quantities is modelled 
{by zero-mean Gaussian distributions with positive definite
matrices $R$ and $Q$}. These covariance matrices and related mean values can be 
calculated from normalized Gaussian integrals denoted by expectation 
or mean values $\MV{\circ}$,
\begin{equation}
\begin{aligned}
\MV{\wvec}&=\Nullvec\in\mathbb{\field}^r\quad ,\;\;\\ 
\MV{\vvec}&=\Nullvec\in\mathbb{\field}^m\quad ,\;\; 
\end{aligned}
\begin{aligned}
\Cov{\wvec}&= \MV{|\wvec\rangle\langle\wvec|}=Q\in\mathbb{\field}^{r\times r} \\
\Cov{\vvec}&= \MV{|\vvec\rangle\langle\vvec|}=R\in\mathbb{\field}^{m\times m}
\end{aligned}
\end{equation}

The main idea behind Kalman filtering is to invert the observation 
model (\ref{obsmod}), where the estimation of the quantities $\xvec_k$ 
from the measurements $\yvec_k$ shows predictor-corrector structure 
\cite{Maybeck79a}: The prediction step relates the estimates 
$\xvec_{k+1}^-$ and $\xvec^+_k$ by extrapolation,
\begin{subequations}
\label{kalmanpredict}
\begin{align} 
\xvec_{k+1}^- &= A_k\xvec_k^+ + \uvec_k \quad ,\quad P_k^+ 
=\Cov{\xvec_k^+}\\ 
P_{k+1}^- &= A_k P_k^+ A_k^H + GQG^H \quad,
\end{align}
\end{subequations}
whereas the correction step updates the estimate $\xvec_{k+1}^-$ by 
relating it to the new measurement $\yvec_{k+1}$: As the underlying 
equations utilize conditional mean values the update can be formulated 
in terms of covariances
\begin{subequations}
\label{kalmanupdate}
\begin{align} 
\label{ccycle}
\big(P_{k+1}^+\big)^{-1} &= \big(P_{k+1}^-\big)^{-1} 
+ C_{k+1}^H R^{-1}C_{k+1}\\
\label{update}
\xvec_{k+1}^+ &= P_{k+1}^+ \big(P_{k+1}^-\big)^{-1} \xvec_{k+1}^- 
+ P_{k+1}^+ C_{k+1}^H R^{-1}\yvec_{k+1} \quad .
\end{align}
\end{subequations}

An alternative form of the covariance cycle in the correction step 
above reads in term of the explicit so called Kalman gain 
matrix $K_{k+1}$: 
\begin{subequations} \label{alternative}
\begin{align} 
K_{k+1}&=P_{k+1}^- C_{k+1}^H 
\big(C_{k+1}P_{k+1}^-C_{k+1}^H+R\big)^{-1}\\
P_{k+1}^+ &= \big(\mathbbm{1}_n-K_{k+1}C_{k+1}\big)P_{k+1}^-\\
\xvec_{k+1}^+ &= \xvec_{k+1}^- 
+ K_{k+1}\big(\yvec_{{k+1}}-C_{k+1}\xvec_{k+1}^-\big) \quad .
\end{align}
\end{subequations}

%%%%%%%%%%%%%%%%%%%%%%%%%%%%%%%%%%%%%%%%%%%%%%%%%%%%%%%%%%%%%%%%%%%%%%
\subsection{$\ell_1$ minimization} \label{minimization}
%%%%%%%%%%%%%%%%%%%%%%%%%%%%%%%%%%%%%%%%%%%%%%%%%%%%%%%%%%%%%%%%%%%%%%
The main problem in CS is to minimize $\norm{\xvec}_1$ subject to the 
constraint $\yvec=C\xvec$ for a fixed $\yvec$ and $C$. Because this
linear structure matches the evolution equation (\ref{themodellinear}) 
of Kalman filtering in Subsection~\ref{kalmanmodeleq} we want to follow 
\cite{KCHGRS10} using a pseudo measurement to iteratively minimize 
the $\ell_1$ norm: 
Here, the $k$th estimate of the state vector $\xvec$ can be associated 
by~$\xvec_k$ and the norm minimization is driven by the fixed constraint
$\yvec$ augmented by the lowered norm $\gamma\norm{\xvec_{k-1}}_1$ 
from the previous step by a factor of $0<\gamma\leq1$ as an additional 
observation.
The main issue of applying the Kalman filter to the $\ell_1$ minimization
procedure consists in suitably linearizing the non-analytic $\ell_1$ norm:
Solving
\begin{equation}
|z|^2 = \Re^2\Big(\frac{z \overline{z_0}}{|z_0|}\Big) 
+ \Im^2\Big(\frac{z \overline{z_0}}{|z_0|}\Big)
\end{equation}
for $|z|$ in the vicinity of $z_0\in\mathbb{C}$ yields for 
$\varphi:=\arg\big(\frac{z\overline{z_0}}{|z_0|}\big)$ the expression
\begin{equation} \label{modexpand}
|z| = \Re\Big(\frac{z \overline{z_0}}{|z_0|}\Big) \sqrt{1+\tan^2\varphi}
%+\frac{1}{2|z_0|} \Im^2\Big(\frac{z \overline{z_0}}{|z_0|}\Big) + \ldots
 \geq \Re\Big(\frac{z \overline{z_0}}{|z_0|}\Big) 
\quad .
\end{equation}

Thus the $\ell_1$ norm of any vector $\zvec\in\mathbb{C}^n$ can be 
linearized in the vicinity of $\zvec_0\in\mathbb{C}^n$ utilizing its
phase information {$\pvec\in\mathbb{C}^n$} by
\begin{equation}\label{normbound}
 \Re \langle\pvec|\zvec\rangle\leq \norm{\zvec}_1\quad ,\quad
\langle\pvec|=\langle\pvec|(\zvec_0):=
\bigg(\frac{(\overline{\zvec_0})_1}{|(\zvec_0)_1|},\ldots,
    \frac{(\overline{\zvec_0})_n}{|(\zvec_0)_n|}\bigg)\quad.
\end{equation}

{Due to vectors over the field $\mathbb{C}$ it is 
necessary to distinguish row vectors $\langle \pvec|$ from the usual
column vectors $|\pvec\rangle:=\pvec$ related by hermitian conjugation. 
The notation is based on complex scalar products and matrix 
multiplication, cf.\ Appendix~\ref{appvectors}.}

%%%%%%%%%%%%%%%%%%%%%%%%%%%%%%%%%%%%%%%%%%%%%%%%%%%%%%%%%%%%%%%%%%%%%%
\subsection{Pseudo Measurements}
%%%%%%%%%%%%%%%%%%%%%%%%%%%%%%%%%%%%%%%%%%%%%%%%%%%%%%%%%%%%%%%%%%%%%%
Involving the observation $\yvec=C\xvec\in\mathbb{C}^m$ as a constraint
like e.g.\ a Fourier transform the Kalman filter equations 
(\ref{kalmanpredict}) and (\ref{kalmanupdate})
read (for $A=G=\mathbbm{1}$, $\uvec_k=\Nullvec$) for the estimates 
{$\xvec_k:=\xvec_k^+$} to the vector $\xvec$
{involving a full prediction-correction step}
\begin{subequations}
\begin{align}
\label{ccycle2}
P_{k+1}^{-1} &= (P_k+Q)^{-1}+ C_{k+1}^H R^{-1} C_{k+1}\\
 \label{update2}
\xvec_{k+1} &= P_{k+1}P_k^{-1}\xvec_k + P_{k+1}C_{k+1}^H R^{-1}\yvec_{k+1}
\end{align}
\end{subequations}
with the initial values $P_0\in\mathbb{C}^{n\times n}$, 
$\xvec_0\in\mathbb{C}^n$ and the constant parameters 
$C\in\mathbb{C}^{m\times n}$, $R\in\mathbb{C}^{(m+1)\times n}$ and 
$Q\in\mathbb{C}^{n\times n}$ to reconstruct $\xvec=\lim_{k\to\infty}\xvec_k$ 
from a given $\yvec$ as a limiting value. The minimization of the $\ell_1$ 
norm is incorporated into the (pseudo) measurements by
\begin{equation}
\yvec_{k+1}=
\Bigg(\begin{matrix}\yvec\\\hline
   \gamma_k\norm{\xvec_k}_1\end{matrix}\Bigg)\in\mathbb{C}^{m+1}
\quad ,\quad
C_{k+1}= 
\Bigg(\begin{matrix}C\\ \hline \langle\pvec|(\xvec_k)\end{matrix}\Bigg) 
\in\mathbb{C}^{(m+1)\times n}\quad ,
\end{equation}
where the factor $0<\gamma_k\leq1$ adaptively lowers the current
$\ell_1$ norm solving for a corresponding estimate $\xvec_{k+1}$ in 
the next iteration step. For a solution to the linear constraint 
$C\in\mathbb{C}^{m\times n}$ as initial data can serve
\begin{equation}
\xvec_0 =
\begin{cases}
(C^H C)^{-1}C^H\yvec \text{ for } m>n\\
C^H (CC^H)^{-1}\yvec \text{ for } m\leq n
\end{cases}\quad.
\end{equation}

%%%%%%%%%%%%%%%%%%%%%%%%%%%%%%%%%%%%%%%%%%%%%%%%%%%%%%%%%%%%%%%%%%%%%%
\subsection{Convergence Considerations} 
%%%%%%%%%%%%%%%%%%%%%%%%%%%%%%%%%%%%%%%%%%%%%%%%%%%%%%%%%%%%%%%%%%%%%%
By inserting (\ref{ccycle2}) into (\ref{update2}) the enhancement of 
the estimate $\xvec_{k+1}$ compared to  
$\xvec_{k}$ reads with respect to the measurement
$\yvec_{k+1}\in\mathbb{C}^{m+1}$ of a lowered $\ell_1$ norm
\begin{equation}\label{improvement3}
\big(C_{k+1}\xvec_{k+1} -\yvec_{k+1}\big)=
\left[R-C_{k+1}P_{k+1} C_{k+1}^{H}\right] 
R^{-1}\cdot\big(C_{k+1}\xvec_{k} -\yvec_{k+1}\big)\quad .
\end{equation}

Using the covariance cycle (\ref{ccycle2}) again the enhancement 
matrix composed from positive definite matrices $R$, $P$ and $Q$
on the RHS can 
be recast into the more suitable\footnote{a completely factorized 
version of (\ref{improvement2}) reads for $m< n$
\begin{equation}
R\cdot\left[R+C_{k+1}(P_{k}+Q) C_{k+1}^{H}\right]^{-1}
=\big(C_{k+1}(P_{k}+Q)C_{k+1}^H R^{-1}\big)^{-1}
\big(C_{k+1}P_{k+1}C_{k+1}^HR^{-1}\big)
\end{equation}}
form
\begin{equation}\label{improvement2}
\left[R-C_{k+1}P_{k+1} C_{k+1}^{H}\right] 
R^{-1} =R\cdot\left[R+C_{k+1}(P_{k}+Q) C_{k+1}^{H}\right]^{-1} 
\leq \mathbbm{1}_{m+1}
\end{equation}
yielding the inequality 
\begin{equation}\label{estimation}
\big|\langle\evec_j|C_{k+1}\xvec_{k+1} -\yvec_{k+1}\rangle\big| \leq
\big|\langle\evec_j|C_{k+1}\xvec_{k} -\yvec_{k+1}\rangle\big|\quad ,\quad
j=1,\ldots,m+1
\end{equation}
by components. The 
key ingredients are the restricted positive eigenvalues of 
(\ref{improvement2}) bounded from above by the spectral norm
$\norm{\mathbbm{1}_{m+1}}_2=1$, cf.\ (\ref{RQestimate}). 
What remains to prove for 
the inequality in (\ref{improvement2}) is the combination 
$C_{k}(P+Q)C_{k}^H$ to be positive semi definite: 

In the overdetermined case $m\geq n$ this holds true for 
$C_{k}^HC_{k}$ to be positive definite: As the combination 
$C_k(P+Q)C_k^H\in\mathbb{C}^{(m+1)\times(m+1)}$ has only rank~$n$ 
the amount of $m+1-n$ of its eigenvalues are zero. The positive 
sign of the remaining 
eigenvalues can be obtained from a Cholesky factorization of 
$P+Q={FF^H}$ with a lower triangular matrix 
${F}\in\mathbb{C}^{n\times n}$ yielding 
\mbox{$C_k {FF^H} C_k^H=(C_k{F})(C_k{F})^H$}. 
Due to a singular value decomposition \cite{GoLo13} of $C_k{F}$ 
the remaining eigenvalues belong to
$(C_k{F})^H(C_k{F})={F^H}(C_k^HC_k){F}\in\mathbb{C}^{n\times n}$ 
which, however, is along with the prerequisite $C_{k}^HC_{k}>0$ 
positive definite because of
\begin{theoremGoLo}
If $B\in\mathbb{C}^{n\times n}$ is positive definite and 
$X \in\mathbb{C}^{n\times k}$ has rank $k$, 
then the matrix $Y=X^HBX\in\mathbb{C}^{k\times k}$ is also 
positive definite.
\end{theoremGoLo}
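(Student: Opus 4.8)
The plan is to verify directly the two defining properties of a positive definite matrix for $Y=X^HBX$: that it is Hermitian, and that its quadratic form is strictly positive on nonzero vectors. For the first, I would note that positive definiteness of $B$ in particular means $B^H=B$, so that
\begin{equation}
Y^H=(X^HBX)^H=X^HB^HX=X^HBX=Y ,
\end{equation}
which already shows $Y$ is Hermitian and hence has real spectrum.

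For the second, I would take an arbitrary $\vvec\in\mathbb{C}^k$ with $\vvec\neq\Nullvec$ and use associativity of the matrix product to rewrite
\begin{equation}
\langle\vvec|Y|\vvec\rangle=\langle\vvec|X^HBX|\vvec\rangle=\langle X\vvec|B|X\vvec\rangle .
\end{equation}
The single point at which the rank hypothesis enters is the observation that $\rank X=k$ forces $X$ to have trivial kernel, so $X\vvec\neq\Nullvec$ whenever $\vvec\neq\Nullvec$; applying positive definiteness of $B$ to the nonzero vector $X\vvec$ then gives $\langle X\vvec|B|X\vvec\rangle>0$. Hence $\langle\vvec|Y|\vvec\rangle>0$ for every $\vvec\neq\Nullvec$, and combined with the Hermiticity from the first step this is precisely the assertion.

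I do not expect a genuine obstacle: the argument is essentially two lines. The only subtlety worth flagging is that, without the full-column-rank assumption, one would obtain merely $Y\geq 0$, so the injectivity step $\vvec\neq\Nullvec\Rightarrow X\vvec\neq\Nullvec$ is exactly what upgrades semidefiniteness to definiteness. If one prefers to stay in the spirit of the Cholesky and singular-value reasoning used in the paragraph preceding the theorem, an equivalent route is to factor the positive definite matrix as $B=LL^H$ with $L$ invertible and observe that $Y=(L^HX)^H(L^HX)$ is a Gram matrix; since $L$ is invertible and $X$ has full column rank, $L^HX\in\mathbb{C}^{n\times k}$ again has full column rank, whence $\langle\vvec|Y|\vvec\rangle=\norm{L^HX\vvec}_2^2>0$ for every $\vvec\neq\Nullvec$, and again $Y=Y^H$.
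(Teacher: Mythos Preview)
Your argument is correct: Hermiticity follows from $B^H=B$, and strict positivity of the quadratic form follows because full column rank of $X$ makes $X\vvec\neq\Nullvec$ whenever $\vvec\neq\Nullvec$, after which positive definiteness of $B$ applies directly. The alternative Cholesky/Gram route you sketch is equally valid.

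There is nothing to compare against, however: the paper does not supply its own proof of this statement. The theorem is quoted as a cited result from \cite{GoLo13} (Golub and Van Loan) and is invoked as a black box in the convergence discussion preceding it. So your direct verification stands on its own and is in fact more than what the paper provides.
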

By virtue of the same theorem the combination 
$C_k(P+Q)C_k^H\in\mathbb{C}^{(m+1)\times(m+1)}$ 
is always positive definite in the underdetermined case 
$m<n$, which is independent from the sensing matrix 
$C_k$.

So in each iteration step with adaptive \mbox{$0<\gamma_k\leq1$} 
the linearized $\ell_1$ norm is lowered (or remains at 
least constant) subject to the approximated constraint 
$\yvec=C_{k}\xvec_{k}$.
In the real case (\ref{modexpand}) even holds with \lq$=$\rq\
yielding for $j=m+1$ the relation 
\mbox{$\big|\norm{\xvec_{k+1}}_1-\gamma_k\norm{\xvec_k}_1\big|\leq(1-\gamma_k)\norm{\xvec_k}_1$}
implying \mbox{$\norm{\xvec_{k+1}}_1\leq\norm{\xvec_{k}}_1$}
for all $\gamma_k$.
To show a general convergence let us start with the exact 
solution $\xvec_\infty=\xvec$ and $C_\infty$ related to it. 
Because of $A=G=\mathbbm{1}$ with zero shifts $\uvec_k=\Nullvec$ 
in the Kalman filter model (\ref{themodel}) the constant (pseudo) 
measurement $\yvec_\infty$ already implies with $\gamma_k\equiv1$ 
a linear convergence of the series 
$\{C_\infty\xvec_0,C_\infty\xvec_1,C_\infty\xvec_2,\ldots\}$ by
\begin{equation} \label{weakconv}
\big|\langle\evec_j|C_\infty\xvec_{k+1} -\yvec_\infty\rangle\big| \leq
\big|\langle\evec_j|C_\infty\xvec_{k} -\yvec_\infty\rangle\big|
\leq\ldots\leq
\big|\langle\evec_j|C_\infty\xvec_{0} -\yvec_\infty\rangle\big|
\end{equation}
for all $j=1,\ldots,m+1$ starting with an $\xvec_0$ in the vicinity of 
$\xvec_\infty$. Thus the series $\big\{\xvec_0,\xvec_1,\xvec_2,\ldots\big\}$
is said to be {weakly} convergent and reconstructs the original signal
as long as the requirement for sparsity in the CS approach are met. 
In the framework of weak convergence the limiting value of $\Cov{\xvec_k}$ 
read for large $k\gg1$
\begin{equation}
P_k \geq
\begin{cases}
P_0 - P_0 C_\infty^H \big(C_\infty P_0 C_\infty^H\big)^{-1}C_\infty P_0
+\mathcal{O}\big(\frac1k\big) &\text{ for } m<n\\
\frac1k \big(C_\infty^H R^{-1}C_\infty\big)^{-1} 
+\mathcal{O}\big(\frac1{k^2}\big)
&\text{ for } m\geq n
\end{cases}
\end{equation}
where \lq$=$\rq\ represents the special case $Q\equiv0$. The limit 
$Q\to\infty$ is only possible for $m\geq n$ yielding a constant covariance
$P_k=(C_\infty^H R^{-1}C_\infty)^{-1}$ in each iteration.

\vspace*{1ex}
For an example of linear compressive sensing in the framework of Kalman 
filtering see \cite{LoEsCo15} using random samples of Fourier 
coefficients. 
Applying the method to a quadratic non-linear observation model is shown
for coherent $X$\!-ray scattering in the next Section~\ref{linobsmod}.

%%%%%%%%%%%%%%%%%%%%%%%%%%%%%%%%%%%%%%%%%%%%%%%%%%%%%%%%%%%%%%%%%%%%%%
\section{Linearized Observation Model} \label{linobsmod}
%%%%%%%%%%%%%%%%%%%%%%%%%%%%%%%%%%%%%%%%%%%%%%%%%%%%%%%%%%%%%%%%%%%%%%
Like the $\ell_1$ norm in subsection \ref{minimization} we linearize the 
squared observations (\ref{signal}) to apply the Kalman filtering scheme. 
Differentiating with respect to $x_k$ and 
$\overline{x_k}$ respectively yields
\begin{equation}
\frac{\partial}{\partial x_k}\big|S_r\big|^2 
=\Big(\langle\xvec|\!|T_r\Big)_k
\quad,\quad
\frac{\partial}{\partial \overline{x_k}}\big|S_r\big|^2 
=\Big(T_r|\xvec\rangle\Big)_k \quad
\end{equation}
and Taylor expansion around $\zvec_0\in\mathbb{C}^n$ reads up to 
1st order
\begin{equation}
\big|S_r\big|^2 = 2\Re \langle\zvec_0|\!|T_r|\xvec\rangle -
  \langle\zvec_0|\!|T_r|\zvec_0\rangle + \ldots\quad , 
\quad r=0,1,\ldots,n-1 \quad .
\end{equation}

With the biases $s_r=\langle\zvec_0|\!|T_r|\zvec_0\rangle$, 
$r=0,1,\ldots,n-1$ from the linearization the observation model is
\begin{align}\label{observationmodel}
\begin{pmatrix}\big|S_0\big|^2\\ \vdots\\ \big|S_{n-1}\big|^2\\
               \norm{\xvec}_1 \end{pmatrix} &=
\begin{pmatrix}2\langle\zvec_0|\!|T_0 \\ \vdots \\2\langle\zvec_0|\!|T_{n-1} \\ 
               \langle\pvec|(\zvec_0)\end{pmatrix}\cdot
\begin{pmatrix}x_0\\\vdots\\x_{n-2}\\x_{n-1}\end{pmatrix}
-\begin{pmatrix}s_0\\\vdots\\s_{n-1}\\0\end{pmatrix} \\[1em]
\yvec_k\hspace{1.9em} &=\hspace*{3em} C_k \hspace*{1.95em}\cdot\hspace*{1.6em}\xvec
\hspace*{1.5em} -\hspace*{1.4em}\svec_k
\end{align}
where the real part over all $C\xvec$ combinations  have to be 
taken. The model in detail reads 
$\yvec_k=\Re\big(C_k\xvec\big)-\svec_k\in\mathbb{R}^{n+1}$ and 
because of the $C\xvec$ combinations we choose the alternative 
representation (\ref{alternative}) of the Kalman filter equations
to invert (\ref{observationmodel}). Then the 
iterated estimates $\xvec_k$ to the state vector $\xvec$ read
\begin{subequations}\label{filtereq}
\begin{align}
K_{k+1}&=(P_{k}+Q) C_{k+1}^H 
  \big(C_{k+1}(P_{k}+Q) C_{k+1}^H+R\big)^{-1}\\
P_{k+1} &= \big(\mathbbm{1}_n-K_{k+1}C_{k+1}\big)(P_{k}+Q)\\
\xvec_{k+1} &= \xvec_{k} + 
K_{k+1}\cdot\Re\Big[\yvec_{k+1}-C_{k+1}\xvec_{k}+\svec_{k+1}\Big]
\end{align}
\end{subequations}
with the complex sensing matrix $C_{k+1}\in \mathbb{C}^{(n+1)\times n}$, 
biases $\svec_{k+1}\in\mathbb{C}^{n+1}$ and the given real (pseudo) 
measurements $\yvec_{k+1}\in\mathbb{R}^{n+1}$ according to
\begin{equation}
\begin{aligned}
C_{k+1}=\begin{pmatrix}2\langle\xvec_k|\!|T_0 \\ 
   \vdots \\2\langle\xvec_k|\!|T_{n-1} \\ 
   \langle\pvec|(\xvec_k)\end{pmatrix} 
\end{aligned} \;\; ,\;\;
\begin{aligned}
\svec_{k+1}=
  \begin{pmatrix}\langle\xvec_k|\!|T_0|\xvec_k\rangle\\ \vdots\\
\langle\xvec_k|\!|T_{n-1}|\xvec_k\rangle\\ 0
\end{pmatrix}
\end{aligned} \;\; ,\;\;
\begin{aligned}
\yvec_{k+1}=\begin{pmatrix}\big|S_0\big|^2\\ 
  \vdots\\ \big|S_{n-1}\big|^2\\
  \gamma_{k}\norm{\xvec_k}_1 \end{pmatrix} 
\end{aligned}
\;\; .
\end{equation}
\vspace*{1ex}

A properly chosen factor $0<\gamma_{k}\leq1$ adaptively lowers the 
$\ell_1$ norm in each iteration step~$k$ to reconstruct the signal 
$\xvec=\lim_{k\to\infty}\xvec_k$ as a limiting value from its 
given squared measurements 
$\langle\yvec|=\big(\big|S_0\big|^2,\ldots,\big|S_{n-1}\big|^2\big)$. 
Note that $\{T_0|\xvec_k\rangle,T_1|\xvec_k\rangle,
\ldots,T_{n-1}|\xvec_k\rangle\}\subset\mathbb{C}^n$ form an orthogonal 
basis. Thus
\begin{equation}
C_{k+1}^H C_{k+1} 
= 4\sum_{r=0}^{n-1}T_r|\xvec_k\rangle\langle\xvec_k|\!|T_r 
+ |\pvec\rangle\langle\pvec|(\xvec_k)
\end{equation}
is hermitian and positive definite due to (\ref{RQestimate}) for all 
$k$ with $|\xvec_k\rangle\not=|\Nullvec\rangle$ which is sufficient 
to prove weak convergence of the $\ell_1$ minimization under the 
constraint of squared observations, cf.\ 
(\ref{improvement2})-(\ref{weakconv}).

%%%%%%%%%%%%%%%%%%%%%%%%%%%%%%%%%%%%%%%%%%%%%%%%%%%%%%%%%%%%%%%%%%%%%%
\subsection{Example 1: Stacking Sequence}
%%%%%%%%%%%%%%%%%%%%%%%%%%%%%%%%%%%%%%%%%%%%%%%%%%%%%%%%%%%%%%%%%%%%%%

\begin{figure}
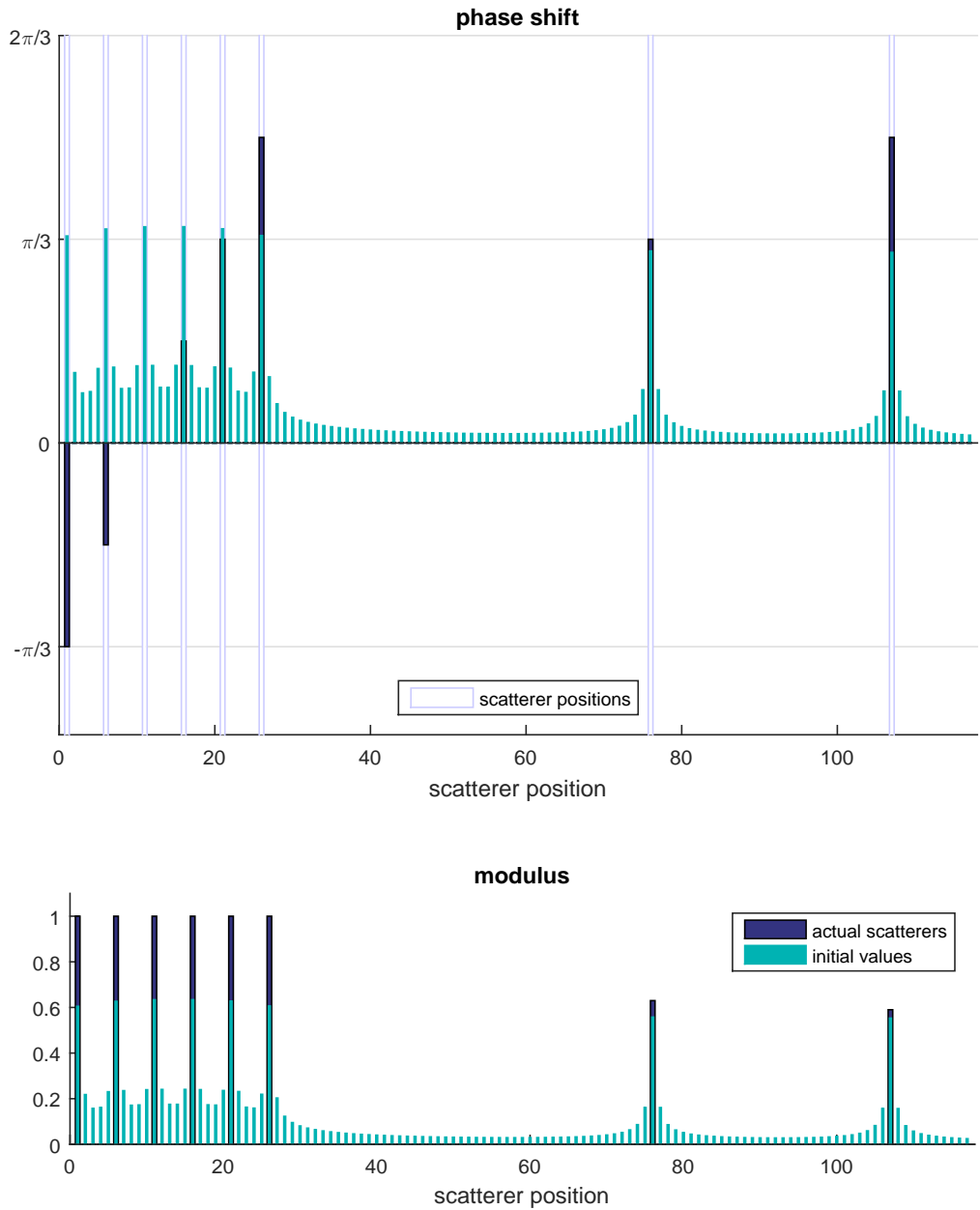

\includegraphics[width=0.8\textwidth]{initial1b.eps}
\vspace*{2em}

\hspace*{0.01ex}
\includegraphics[width=0.78\textwidth]{initial2b.eps}
\caption
{A number of $6$ equidistant scatterers on a periodic chain of length 
$117$. To allow for a signal from the substrate the wires are growing 
on we added some parasitic scatterers to the chain. As initial data we 
use the moduli and phases from $\xvec_{\mathrm{sparse}}$ symmetrically 
broadened by a leakage $\frac12(\fvec_0^+ + \fvec_0^-)$, cf.\ 
(\ref{leakage}), with shift $s=0.3$ and amplitudes $0.6$ and $1.1$ 
respectively.} 
\label{settingstack}
\end{figure}

To demonstrate the relative phase recovery from the observed intensity
(\ref{signal}) not only restricting to the Zinc-blende and Wurtzite phases 
(\ref{phasesc}) assume a linear chain with $117$~lattice sites and 
$6$~equidistant sparse amplitudes $|x_j|=1$ with phases from the set 
$\{-\frac{\pi}{3},-\frac{\pi}{6},0,\frac{\pi}{6},\frac{\pi}{3},\frac{\pi}{2}\}$ 
forming $\xvec_{\mathrm{sparse}}\in\mathbb{C}^n$. 

\begin{figure}
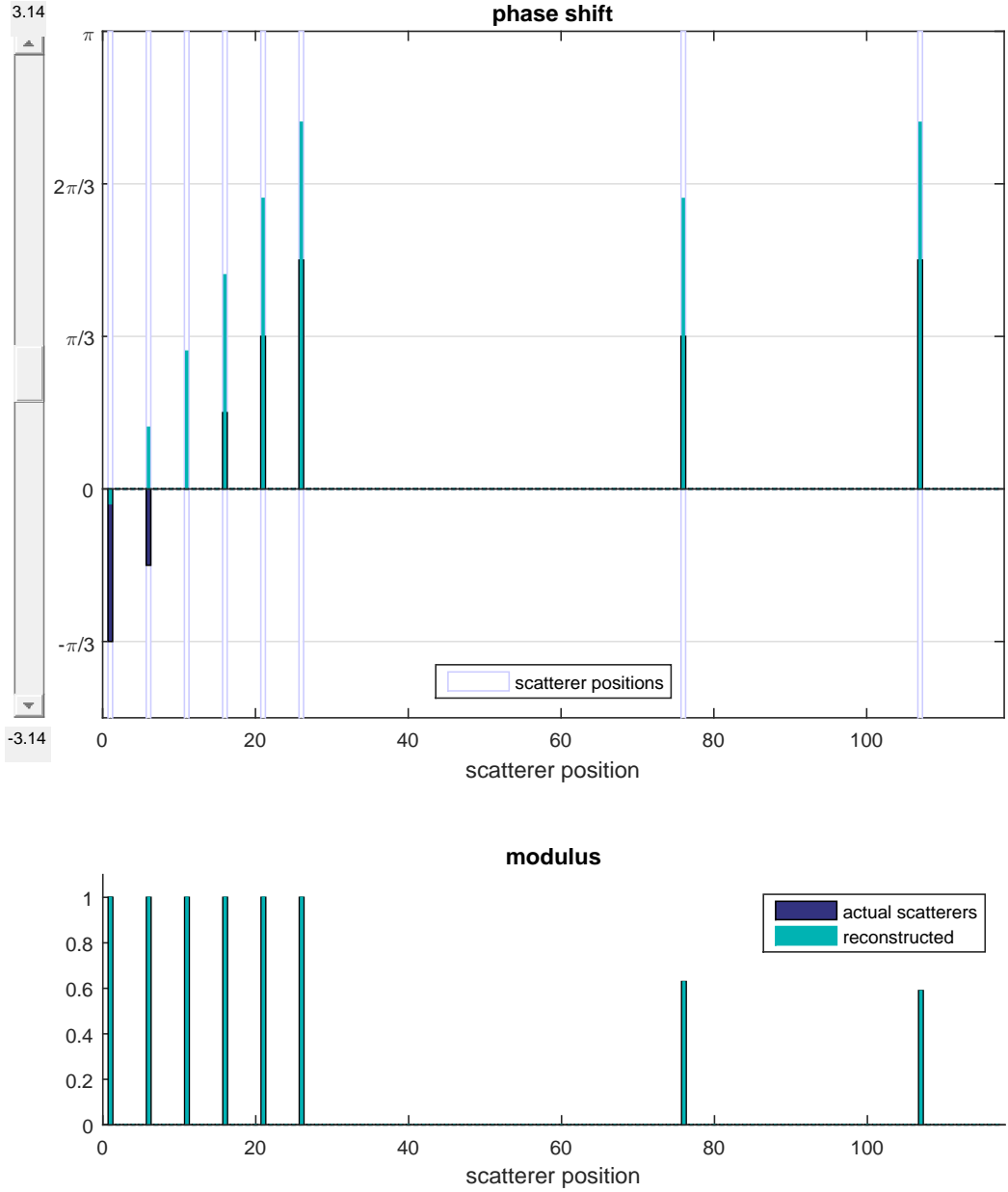

\includegraphics[width=0.83\textwidth]{result1b.eps}
\vspace*{2em}

\hspace*{1.4em}
\includegraphics[width=0.78\textwidth]{result2b.eps}
\caption
{Reconstructed moduli of all scatterers. In the phase domain the values
corresponding to vanishing moduli are suppressed. Due to the symmetry of
the sensing problem the phases are only reconstructed up to a
global phase.}
\label{resultstack}
\end{figure}

\begin{figure}
\includegraphics[width=0.83\textwidth]{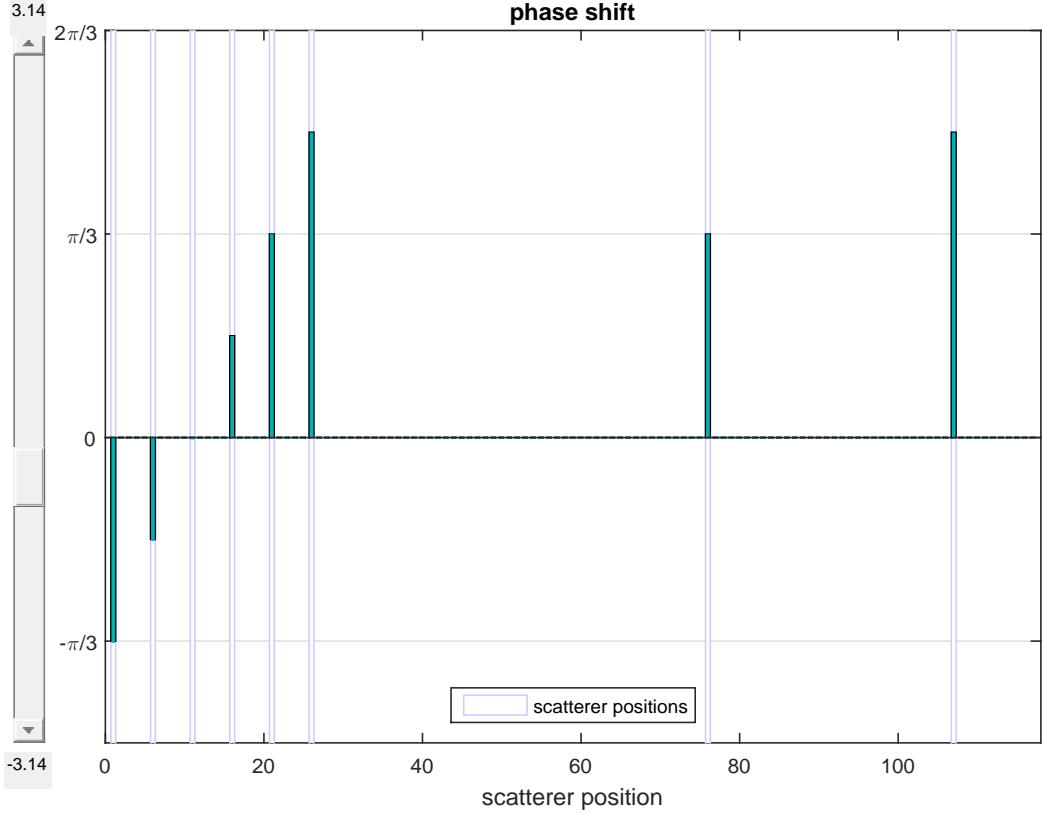}
\caption
{Adding a global phase (slider on the left) verifies that all 
relative phases are reconstructed correctly. The open rectangles 
mark the position of occupied lattice sites on the chain. In the 
original objective this would refer to bilayers and substrate in 
the nanowire.} \label{resultstackslider}
\end{figure}

\vspace*{1em}
As in the original objective the number of bilayers is roughly known 
we use as initial values $\xvec_0\in\mathbb{C}^n$ a broadened modulus 
and phase distribution related to the setting $\xvec_{\mathrm{sparse}}$,
cf.\ Figure~\ref{settingstack}: With the support 
$\mathbb{S}:=\supp \xvec_{\mathrm{sparse}}$ a leakage
\begin{equation} \label{leakage}
\big(\fvec^\pm_0\big)_k =\sum_{\alpha\in \mathbb{S}} 
\frac{\sin\big((k\pm s-\alpha)\pi\big)}{(k\pm s-\alpha)\pi}\in\mathbb{R}
\quad , \quad k=0,\ldots,n-1 
\end{equation}
for the moduli and phases of $\xvec_{\mathrm{sparse}}$ is with 
shifts $0\leq s <1$ a suitable way to also consider the limit $s\to0$ of 
exactly known positions. Note that for the phases we only used
one fixed value for each occupied lattice site broadened by the 
leakage. So the main effort is the recovery of the relative phases
from the measurements.

\begin{figure}
\includegraphics[width=0.78\textwidth]{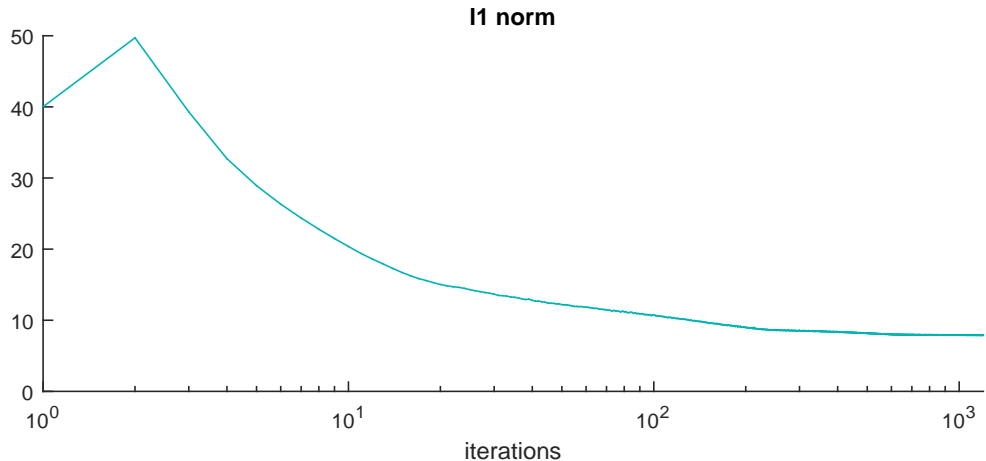}
\caption
{Note that the $\ell_1$ convergence (\ref{estimation}) in the $k$th
iteration step relates the linearized norm 
$\langle\pvec|\xvec_k\rangle \leq \norm{\xvec_{k}}_1$
to $\norm{\xvec_{k-1}}_1$. 
Because of this inequality, cf.\ 
(\ref{normbound}), there is no guarantee of a monotonically lowered 
$\ell_1$~norm, which can be seen in the beginning (iteration step $2$).} 
\label{l1converge}
\end{figure}

\vspace*{1ex}
For an exponential decaying lowering factor $\gamma_k=1-0.1 \exp(-0.0019 k)$ 
within $1200$ iterations the reconstruction results for noiseless
synthetic measurements are shown in 
Figures~\ref{resultstack}~and~\ref{resultstackslider}.
Experientially, good algorithm's covariances were found to be
\begin{equation}
P_0=0.3\cdot\mathbbm{1}_{117} \;,\;
Q=10^{-8}\cdot\mathbbm{1}_{117}\;,\;
R=\diag(10^{-4},\ldots,10^{-4},10^{-6})\in\mathbb{R}^{118\times118}
\end{equation}
\newpage

To drive the $\ell_1$ minimization the corresponding entry 
(here $10^{-6}$) in the $R$ matrix has to be much lower in magnitude 
than the ones (here $10^{-4}$) related to the observations of the 
constraint. When reconstructed amplitudes are said to vanish (e.g.\ 
in Figure~\ref{resultstack}) this means only up to numerical 
precision where the zero value is dominated by the entries of $Q$.
Empirically, the exponential decay in the lowering factor $\gamma_k$ 
is chosen such that its values approaches about $0.99,\ldots,0.999$ 
after the maximum number of iterations. A resulting typical smooth 
convergence of the $\ell_1$~norm due to the exponential decay is 
shown in Figure~\ref{l1converge}: {The constant tail
can be used to determine a maximal number of iterations as a stopping 
criterion. Note that the reconstructed results beyond this guessed 
number are insensitive to further iterations meaning for the 
equations (\ref{filtereq}) to perform fixed point iterations within 
accuracies related to the covariances $Q$ and $P_k$.} 
\vspace*{1em}

In the nanowire the equidistant bilayers are at fixed positions in 
growth direction $\avec_3$. Thus the reconstruction of their complex 
scattering amplitudes does not suffer from \lq off grid\rq\ problems
in general, see e.g.\ \cite{CSchPC11,DuBa13,TBSR13}, if integer multiples
of the lattice constants are used for the DFT.

%%%%%%%%%%%%%%%%%%%%%%%%%%%%%%%%%%%%%%%%%%%%%%%%%%%%%%%%%%%%%%%%%%%%%%
\subsection{Example 2: Pattern Reconstruction}
%%%%%%%%%%%%%%%%%%%%%%%%%%%%%%%%%%%%%%%%%%%%%%%%%%%%%%%%%%%%%%%%%%%%%%

The 2D reconstruction from (\ref{FTsquare2D}) can be calculated with 
the same algorithm already used for Example~1 above, as the 2D data 
set of size $12\times25$ can be mapped to a 1D~vector by means of 
(\ref{mapping2D1D}), cf.\ Figure~\ref{initial2D}. Like the example 
from Figure~\ref{settingstack}, we use as initial values with the 
same reasoning and broadening parameters the given distribution.
Within $2000$ iterations and an exponentially decaying lowering factor 
\mbox{$\gamma_k=1-0.17\cdot\exp(-0.0028 k)$} the reconstruction for
noiseless measurements is shown in Figure~\ref{nicereconstruct} 
with empirically good algorithm's covariances
\begin{equation}
P_0=0.3\cdot\mathbbm{1}_{300} \;,\;
Q=10^{-7}\cdot\mathbbm{1}_{300}\;,\;
R=\diag(10^{-4},\ldots,10^{-4},10^{-6})\in\mathbb{R}^{301\times301}
\;\;.
\end{equation}
Note that there is a combined translational and reflectional 
symmetry mapping the vectorized real scattering amplitude
distribution onto itself, cf.\ Figure~\ref{initial2D}. As this 
is also a symmetry of the squared observations (\ref{signal})
in the linear configuration, 
there are two competing solutions to the sensing problem. 
Interestingly, breaking this symmetry improved the 
$\ell_1$~convergence in Figure~\ref{nicereconstruct}:
Without the parasitic scatterers the symmetric positions of
consecutive sites\footnote{At least in linear CS with a 
constant sensing matrix there is also to consider the 
resolution of consecutive frequency bins in discrete Fourier 
transform, cf.\ \cite{TBSR13} and references therein.} are
correctly {(with respect to the support)} 
found after $5000$ iterations, cf.\ Figure~\ref{artefact}, 
and the remaining discrepancies in the amplitudes do not even 
vanish completely after inefficient $50000$ iterations. 
On the contrary the random sample of $25$ scatterers from 
Figure~\ref{random2D} is already reconstructed with less than 
$1000$ iterations suggesting a strong dependence of the 
algorithm's convergence from the vector under consideration. 
\begin{figure}
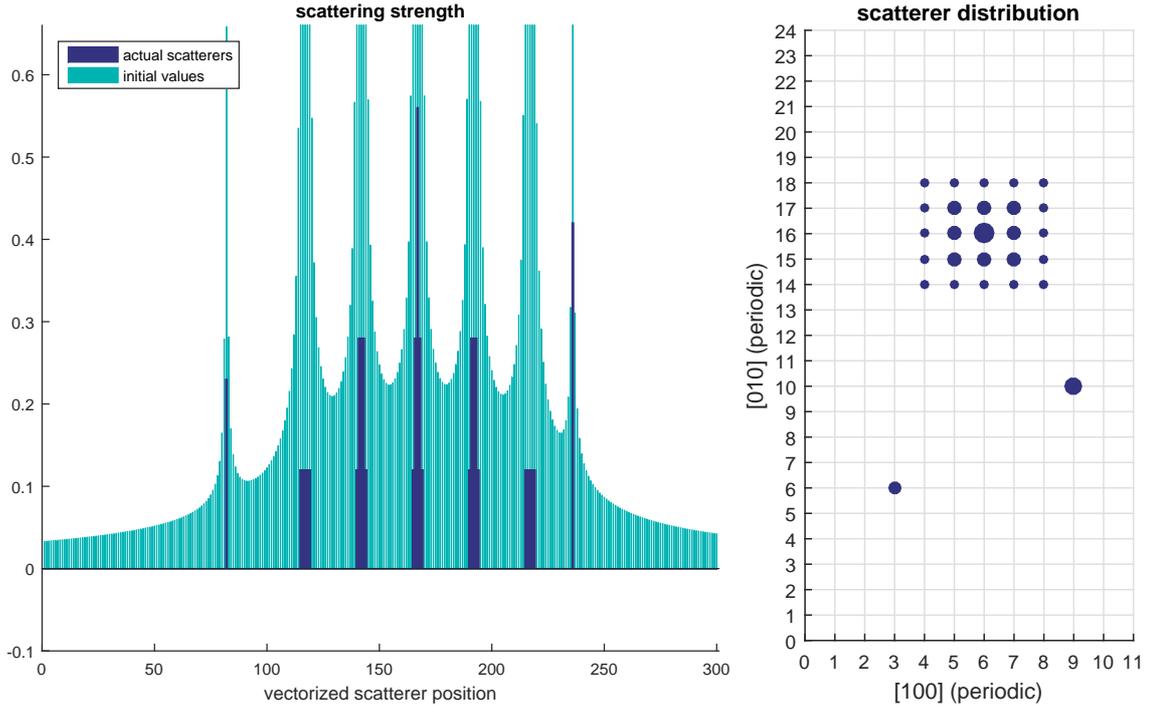

\includegraphics[height=0.4\textheight]{initial2D1.eps}
\includegraphics[height=0.4\textheight]{initial2D2.eps}
\caption
{The actual scatterers forming the $5\times5$ structure are made
from real amplitudes $0.12$, $0.28$ and $0.56$ whereas the two 
parasitic scatterers are assigned with $0.23$ and $0.42$ (left 
panel). The scattering strength in the plane is represented by 
the area of the filled circles (right panel).} 
\label{initial2D}
\end{figure}
\begin{figure}
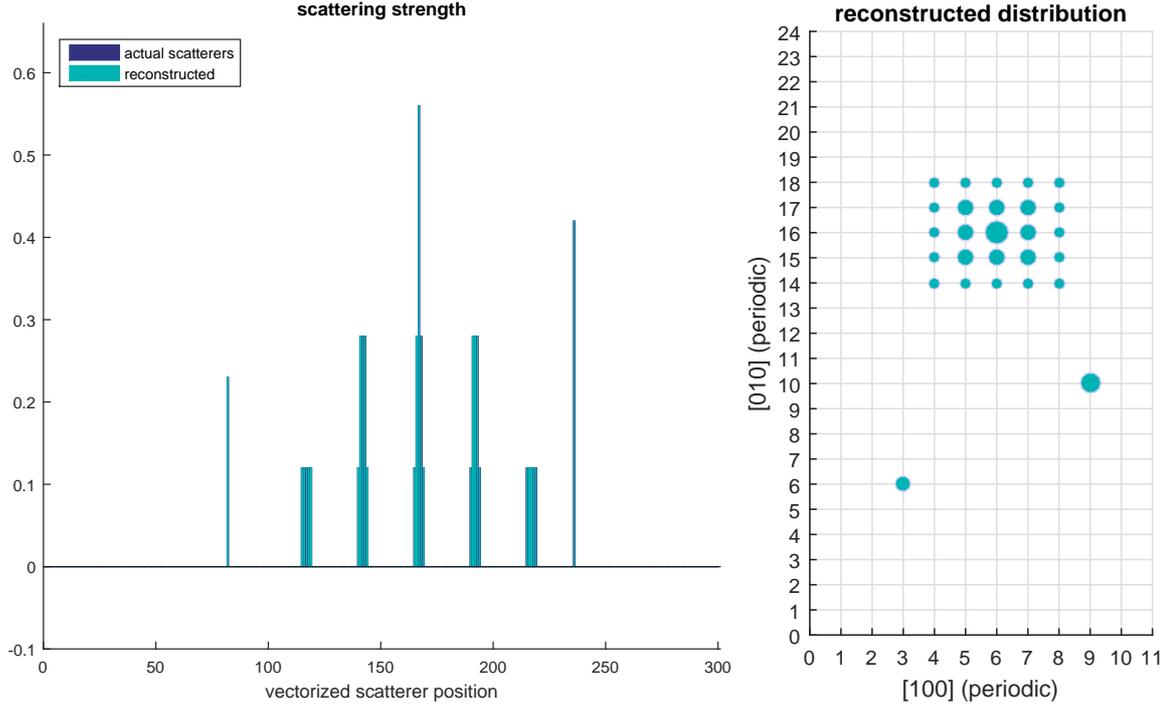

\includegraphics[height=0.4\textheight]{result2D1.eps}
\includegraphics[height=0.4\textheight]{result2D2.eps}
\caption
{The real amplitudes are nicely reconstructed (left panel) after 
$2000$ iterations which can also be seen in the 2D setting (right 
panel) by comparing with Figure~\ref{initial2D}.} 
\label{nicereconstruct}
\end{figure}
\begin{figure}
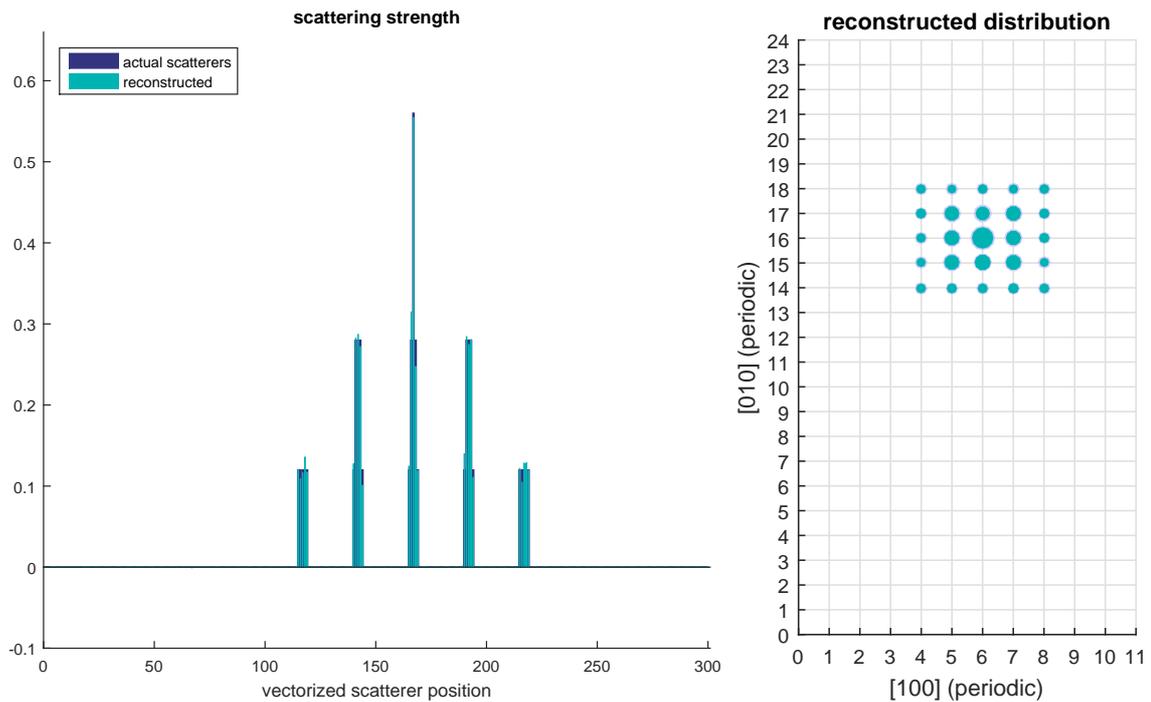

\includegraphics[height=0.394\textheight]{artefact2D1.eps}
\includegraphics[height=0.394\textheight]{artefact2D2.eps}
\caption
{Without parasitic scatterers the positions are correctly found 
(right panel) after $5000$ iterations and 
$\gamma_k=1-0.17\cdot\exp(-0.0012 k)$. The remaining discrepancies
in the amplitudes seem to be complementarily symmetric with respect 
to the central peak (left panel).} \label{artefact}
\end{figure}
\begin{figure}
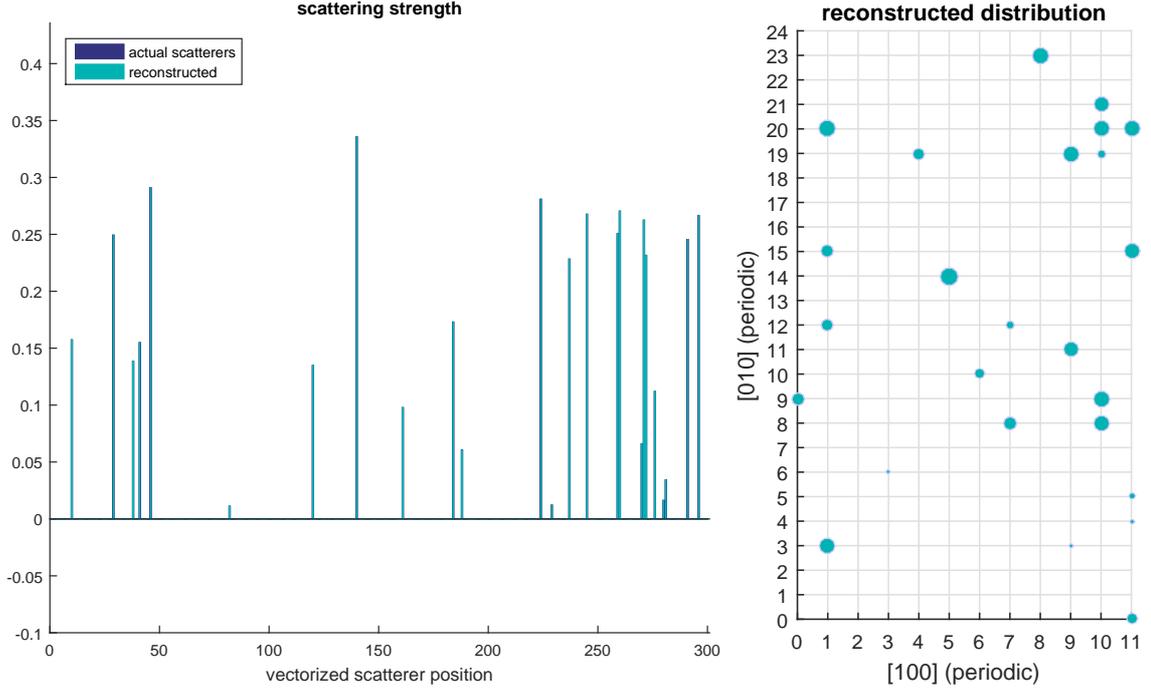

\includegraphics[height=0.39\textheight]{random2D11.eps}
\includegraphics[height=0.39\textheight]{random2D2.eps}
\caption
{A random distribution of $25$ scatterers with real random 
amplitudes normalized in the $\ell_2$~norm. The reconstruction is 
without artefacts and discrepancies reached after $1000$ iterations 
with $\gamma_k=1-0.17\cdot\exp(-0.0058 k)$.} 
\label{random2D}
\end{figure}

%%%%%%%%%%%%%%%%%%%%%%%%%%%%%%%%%%%%%%%%%%%%%%%%%%%%%%%%%%%%%%%%%%%%%%
\section{Conclusion and Outlook} \label{outlook}
%%%%%%%%%%%%%%%%%%%%%%%%%%%%%%%%%%%%%%%%%%%%%%%%%%%%%%%%%%%%%%%%%%%%%%

We aimed to implement non-linear compressive sensing for complex 
vectors $\xvec$ by inverting the underlying observation model with 
Kalman filtering. For this reason we proved a weak convergence of 
the filter equations for complex sensing matrices $C$ and hermitian 
covariances~$R,P,Q$. For the example of quadratic nonlinearities we 
applied our formulas to retrieve relative phase information in the 
objective of simulated noiseless coherent $X$\!-ray diffraction. Due to 
the nonlinearity the noise in the intensity is not Gaussian and needs 
further considerations to also account for e.g.\ lattice distortions or the
finite coherence of the primary beam. As an outlook we presented a 2D 
pattern reconstruction which, hopefully, can help to investigate if 
the cross sections of the wires were grown regularly.
\enlargethispage{1\baselineskip}

Because of the Jacobians building up the sensing matrix $C$ the 
convergence of the underlying $\ell_1$ norm minimization does depend 
on the reconstructed vector $\xvec$ and may go beyond the resolution
issues \cite{TBSR13} for even constant sensing matrices. For this 
reason a thorough analysis on the relation between maximal sparsity
{(for the examples here about 10\% of the 
available lattice sites)} and the Toeplitz matrix (\ref{toeplitz}) 
representing the sensing process for a successful CS is needed. 
As mentioned in the introduction algorithms in general minimizing 
the $\ell_1$ norm could be of interest retrieving information out 
from intensity spectra: {To compare our results we 
applied the non-linear version \cite{Valkonen14} of the primal dual
algorithm \cite{ChPo10} to the 1D sensing problem above yielding similar
results with respect to iteration numbers, reconstructed amplitudes 
and phases. On the one hand primal dual reconstructed the zeros 
outside the support (once a solution was isolated from the algorithm) 
numerically exact compared to accuracies of orders $10^{-5}$ reached by 
the Kalman filter. On the other hand our approach focused in the first 
quarter of the iterations on guessing the support by strongly increasing 
and decreasing the corresponding amplitudes, whereas primal dual 
homogeneously acted on all the amplitudes during all iterations. For 
a detailed comparison and maybe a combination, the whole parameter 
range of the approaches have to be investigated.} 
To apply the Kalman filter-based algorithm above to recorded data 
\cite{DBLP16} we need to deal with several $10^4$ $\xvec$-vector's entries
describing a sparsely (by a factor of roughly~\mbox{$1:10^1$)} occupied 
linear grid covering all the approximately~$10^3$ bilayers in a nanowire 
of $500\,$nm in height. For this reason matrix inversions, 
cf.~(\ref{filtereq}), should be reformulated by e.g.\ sequential processing 
techniques to make the algorithm more scalable. Furthermore it would also 
be interesting if other adaptive lowering factors $\gamma_k$ compared 
to the exponential ones yield faster convergences.

As the reconstruction from quadratic constraints seems to depend only
little on the explicit algorithm used for the $\ell_1$ minimization, 
it would be interesting to figure out to which extend the linear matrix 
completion approach \cite{CESV13,ChMoPa10,FoRaWa11} with respect to the 
nuclear norm can still be applied in face of seemingly too few numbers 
of independent observations.
%To be explicit one has to reconstruct e.g.\ $n^2$ unknowns in a dyadic 
%product $|x\rangle\langle x|\in\mathbb{C}^{n\times n}$ from $n$ squared 
%Fourier coefficients in (\ref{signal}).

%%%%%%%%%%%%%%%%%%%%%%%%%%%%%%%%%%%%%%%%%%%%%%%%%%%%%%%%%%%%%%%%%%%%%%
\begin{acknowledgments}
This work has been supported in part by the Deutsche Forschungsgemeinschaft
under grants Lo~455/20-1 and Pi~214/38-2.
\end{acknowledgments}
%%%%%%%%%%%%%%%%%%%%%%%%%%%%%%%%%%%%%%%%%%%%%%%%%%%%%%%%%%%%%%%%%%%%%%%

\begin{appendix}

%%%%%%%%%%%%%%%%%%%%%%%%%%%%%%%%%%%%%%%%%%%%%%%%%%%%%%%%%%%%%%%%%%%%%%
\section{Vector Notations for Complex Numbers} \label{appvectors}
%%%%%%%%%%%%%%%%%%%%%%%%%%%%%%%%%%%%%%%%%%%%%%%%%%%%%%%%%%%%%%%%%%%%%%
Vectors over the field $\mathbb{C}$ consist of complex 
numbers $z=x+\I y$ with $x,y\in\mathbb{R}$ and $\I^2=-1$. 
With a vector $|\vvec\rangle\in\mathbb{C}^n$ we usually associate $n$ 
complex numbers $v_1,v_2,\ldots,v_n$ arranged as a column.
With the complex
conjugate $\overline{z}:=x-\I y$ row vectors $\langle\vvec|\in\mathbb{C}^n$ 
are considered to be dual to $|\vvec\rangle$ by
\begin{equation}
\begin{gathered}
|\vvec\rangle=\begin{pmatrix}v_1\\\vdots\\v_n\end{pmatrix} \in \mathbb{C}^n
\end{gathered}
\quad ,\quad
\begin{aligned}
\langle\vvec|:&= \big(\overline{v_1}, \ldots ,
\overline{v_n}\big) \in \mathbb{C}^n\\
|\vvec\rangle&=\sum_{k=1}^n v_k |\evec_k\rangle \;\; , \;\; 
\big(|\vvec\rangle\big)_j=v_j\in\mathbb{C}
\end{aligned}
\end{equation}
where $\big(|\evec_j\rangle\big)_k =\delta_{jk}$ denotes with 
$j,k=1,\ldots,n$ the usual standard basis. With the complex scalar 
product
\begin{equation}
\langle\vvec|\wvec\rangle=\sum_{k=1}^n \overline{v_k}w_k \quad ,\quad 
|\vvec\rangle,|\wvec\rangle\in\mathbb{C}^n
\end{equation}
each vector $|\vvec\rangle\in\mathbb{C}^n$ can be represented in an 
unitary basis 
$\big\{|\qvec_1\rangle,\ldots,|\qvec_n\rangle\big\}\subset\mathbb{C}^n$ 
yielding the expansion
\begin{equation}
|\vvec\rangle = \sum_{k=1}^n|\qvec_k\rangle\langle\qvec_k|\vvec\rangle 
\quad , \quad
\langle \qvec_j|\qvec_k\rangle=\delta_{jk}\text{ with } 
j,k=1,\ldots,n \quad .
\end{equation}

With $A=(a_{ij})_{ij}\in\mathbb{C}^{m\times n}$ we denote matrices 
with $m$ rows and $n$ columns consisting of the complex entries 
$a_{ij}=(A)_{ij}$. Let $(A^H)_{ij}:=\overline{(A)_{ji}}=\overline{a_{ji}}$ 
represent the hermitian conjugate then we get from its column decomposition
$A^H=\big(|a_1\rangle,\ldots,|a_n\rangle\big)\in\mathbb{C}^{m\times n}$
\begin{equation}
A=\begin{pmatrix}\langle a_1|\\ \vdots\\ \langle a_n|\end{pmatrix} 
\quad ,\quad
\langle a_j|=|a_j\rangle^H\in\mathbb{C}^m\quad .
\end{equation}
Using such a row decomposition the matrix multiplication of 
commensurable $A\in\mathbb{C}^{m\times n}$ and 
$B\in\mathbb{C}^{n\times p}$ can be viewed as $mp$ single scalar 
products yielding with 
$B=\big(|\bvec_1\rangle,\ldots,|\bvec_p\rangle\big)$ 
\begin{equation}
AB=C=(c_{ik})_{ik}\in\mathbb{C}^{m\times p} \quad ,\quad
c_{ik} = \sum_{\ell=1}^n a_{i\ell}b_{\ell k} = 
\langle\avec_i|\bvec_k\rangle
\end{equation}
including products $A|\xvec\rangle\in\mathbb{C}^m$ for 
$B=|\xvec\rangle\in\mathbb{C}^{n}$. Thus with the identity 
$\mathbbm{1}_n=(\delta_{ij})_{ij}\in\mathbb{C}^{n\times n}$ 
unitary matrices 
$Q=\big(|\qvec_1\rangle,\ldots,|\qvec_n\rangle\big)\in\mathbb{C}^{n\times n}$ 
consist of orthonormal (row and) columns defined by the property
\begin{equation}
Q Q^H = Q^H Q = \mathbbm{1}_n \quad , \quad 
\big(Q^H Q\big)_{jk} = \langle\qvec_j|\qvec_k\rangle = \delta_{jk}
\quad , \quad 
Q^H = Q^{-1} \quad .
\end{equation}

The possibility of multiplying matrices 
$A=(a_{ij})_{ij}\in\mathbb{C}^{m\times n}$ and 
$B=(b_{rs})_{rs}\in\mathbb{C}^{p\times q}$ of arbitrary 
dimensions is covered by the Kronecker product
\vspace*{1ex}
\begin{equation} \label{kronecker}
A\otimes B = 
\left(
\begin{array}{c|c|c}
  \begin{smallmatrix}
  a_{11}b_{11}  & \hdots  & a_{11}b_{1 q}\\[-.4ex]
  \vdots &\ddots& \vdots\\[.5ex]
  a_{11}b_{p1} & \hdots & a_{11}b_{pq}
  \end{smallmatrix}
& \hspace{1em}\hdots\hspace{1em} &
  \begin{smallmatrix}
  a_{1 n}b_{11}  & \hdots  & a_{1 n}b_{1 q}\\[-.4ex]
  \vdots &\ddots& \vdots\\[.5ex]
  a_{1 n}b_{p1} & \hdots & a_{1 n}b_{pq}
  \end{smallmatrix}
\\[2.5ex] \hline\vdots & \ddots  & \vdots\\ \hline \phantom{.} & & \phantom{.}\\[-3ex]
  \begin{smallmatrix}
  a_{m 1}b_{11}  & \hdots  & a_{m 1}b_{1 q}\\[-.4ex]
  \vdots &\ddots & \vdots\\[.5ex]
  a_{m 1}b_{p1} & \hdots & a_{m 1}b_{pq}
  \end{smallmatrix}
& \hdots &
  \begin{smallmatrix}
  a_{mn}b_{11}  & \hdots  & a_{mn}b_{1 q}\\[-.4ex]
  \vdots &\ddots& \vdots\\[.5ex]
  a_{mn}b_{p1} & \hdots & a_{mn}b_{pq}
  \end{smallmatrix}
\end{array}
\right) \in\mathbb{C}^{(mp)\times(nq)}
\vspace*{1ex}
\end{equation}
reading in components $\big(A\otimes B\big)_{(ir)(js)}=a_{ij} b_{rs}$
with multiple row and column indices $(ir)$ and $(js)$ respectively.
For $|\xvec\rangle\in\mathbb{C}^n$, $|\yvec\rangle\in\mathbb{C}^q$
and $\lambda\in\mathbb{C}$ the Kronecker product meets
\begin{equation}
\big(\lambda A|\xvec\rangle\big) \otimes \big(B|\yvec\rangle\big)
= \big(A|\xvec\rangle\big) \otimes \big(\lambda B|\yvec\rangle\big)
= \lambda\big(A\otimes B\big) \big(|\xvec\rangle\otimes|\yvec\rangle\big)
\quad.
\end{equation}

Introducing the length of row or column vectors $\xvec\in\mathbb{C}^n$ 
can be accomplished with the $\ell_p$ norms $\norm{\xvec}_p$ defined for 
all real $p>0$ by
\begin{equation}
\norm{\xvec}_p^p := \sum_{j=1}^n |x_j|^p \quad , \quad 
\norm{\xvec}_0:=\card\{j\,|\,x_j\not=0,j=1,\ldots,n\} \quad .
\end{equation}
Especially $\ell_0$ is no norm but can be viewed as 
the number of non-zero entries. The corresponding matrix norms 
of $A\in\mathbb{C}^{m\times n}$ can be related to the vector 
norms according to
\begin{equation}
\norm{A}_p := \sup_{\xvec\in\mathbb{C}^n} 
  \frac{\norm{A\xvec}_p}{\norm{\xvec}_p}
  =\sup_{\norm{\xvec}_p=1}\norm{A\xvec}_p \quad .
\end{equation}

%%%%%%%%%%%%%%%%%%%%%%%%%%%%%%%%%%%%%%%%%%%%%%%%%%%%%%%%%%%%%%%%%%%%%%
\section{Hermitian Matrices}
%%%%%%%%%%%%%%%%%%%%%%%%%%%%%%%%%%%%%%%%%%%%%%%%%%%%%%%%%%%%%%%%%%%%%%
Hermitian matrices $X,Y\in\mathbb{C}^{n\times n}$ can be related by
$X>Y$ ($X\geq Y$) if $X-Y$ is positive (semi) definite which can be 
written as $X-Y>0$ ($X-Y\geq 0$). In addition
\begin{theoremHB}
If $X,Y\in\mathbb{C}^{n\times n}$ are hermitian with $X\geq Y>0$.
Then $0< X^{-1} \leq Y^{-1}$.
\end{theoremHB}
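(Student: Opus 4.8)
The plan is to reduce the operator inequality to a statement about the eigenvalues of a single hermitian matrix, obtained from $X$ by a congruence transformation built from the square root of $Y$.

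First I would dispose of the easy half. Since $Y>0$ it is invertible with $Y^{-1}>0$, and $X\geq Y>0$ forces $X>0$, so $X^{-1}$ exists and $X^{-1}>0$. This already yields the asserted bound $0<X^{-1}$ and guarantees that every inverse appearing below is well defined.

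For the nontrivial inequality $X^{-1}\leq Y^{-1}$ I would work with the unique positive definite square root $Y^{1/2}$ of $Y$ (furnished by the spectral theorem for hermitian matrices) and set $Y^{-1/2}:=(Y^{1/2})^{-1}$. The one fact needed about congruences is that $M\geq 0$ implies $S^H M S\geq 0$ for every $S$, since $\langle\vvec|S^H M S|\vvec\rangle=\langle S\vvec|M|S\vvec\rangle\geq 0$. Applying this to $M=X-Y\geq 0$ with the hermitian $S=Y^{-1/2}$ gives
\[
Z:=Y^{-1/2}XY^{-1/2}\;\geq\;Y^{-1/2}YY^{-1/2}\;=\;\mathbbm{1}_n\quad .
\]
Now $Z$ is hermitian with $Z\geq\mathbbm{1}_n$, so it commutes with $\mathbbm{1}_n$ and, diagonalizing it unitarily, all its eigenvalues are $\geq 1$; since $t\mapsto t^{-1}$ reverses the order on $(0,\infty)$, the inverse satisfies $Z^{-1}\leq\mathbbm{1}_n$. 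Using $Z^{-1}=Y^{1/2}X^{-1}Y^{1/2}$ and applying the same congruence remark once more, now to $M=\mathbbm{1}_n-Z^{-1}\geq 0$ with $S=Y^{-1/2}$, I get
\[
Y^{-1}-X^{-1}\;=\;Y^{-1/2}\bigl(\mathbbm{1}_n-Z^{-1}\bigr)Y^{-1/2}\;\geq\;0\quad ,
\]
which is the claim.

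I expect no genuine obstacle here; the only care required is the bookkeeping of which matrix is conjugated by $Y^{1/2}$ and which by $Y^{-1/2}$, together with the identity $(Y^{-1/2}XY^{-1/2})^{-1}=Y^{1/2}X^{-1}Y^{1/2}$. If one would rather avoid the square root, the two congruence steps can instead be routed through the earlier Theorem~\cite{GoLo13}, applied to $X-Y$ and to $\mathbbm{1}_n-Z^{-1}$ respectively, at the mild cost of separately treating the degenerate case where these matrices are only positive semidefinite (e.g.\ by a limiting argument $X-Y+\varepsilon\mathbbm{1}_n$). Apart from that, the argument uses nothing beyond the spectral theorem and the scalar monotonicity of inversion.
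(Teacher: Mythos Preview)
Your argument is correct and follows essentially the same route as the paper: form $Z=Y^{-1/2}XY^{-1/2}\geq\mathbbm{1}_n$ via the positive square root of $Y$, infer from the eigenvalue bound that $Z^{-1}=Y^{1/2}X^{-1}Y^{1/2}\leq\mathbbm{1}_n$, and undo the congruence. Your write-up is just more explicit about the final congruence step and the easy half $X^{-1}>0$, which the paper leaves implicit.
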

\begin{proof}$Y=Y^{\frac12} Y^{\frac12}$, cf.\ \S~9.2.4 in \cite{GoLo13}
$\Rightarrow \mathbbm{1}_n \leq Y^{-\frac12} X Y^{-\frac12}$
has eigenvalues larger than $1$
$\Rightarrow Y^{\frac12} X^{-1} Y^{\frac12}$ has positive 
eigenvalues lower than $1$
$\Rightarrow 0<Y^{\frac12} X^{-1} Y^{\frac12}\leq \mathbbm{1}_n$.
\end{proof}

Thus all hermitian matrices $R,X\in\mathbb{C}^{n\times n}$ with  
$R$ positive definite and $X$ positive semi definite 
satisfy the inequalities
\begin{equation} \label{RQestimate}
R+X\geq R > 0 \quad \Leftrightarrow \quad 0 < \big(R+X\big)^{-1} \leq R^{-1}
\end{equation}
\end{appendix}

\bibliographystyle{amsplain}
\bibliography{cs,kalman,nano}

\end{document}